\newcommand{\ca}{{\cal A}}
\newcommand{\cx}{{\cal X}}
\newcommand{\bsu}{\boldsymbol{u}}
\newcommand{\bsx}{\boldsymbol{x}}
\newcommand{\bsy}{\boldsymbol{y}}
\newcommand{\bsz}{\boldsymbol{z}}
\newcommand{\natu}{\mathbb{N}}
\newcommand{\real}{\mathbb{R}}
\newcommand{\mrd}{\mathrm{d}}
\newcommand{\rd}{\mathrm{\, d}}
\newcommand{\vol}{{\mathbf{vol}}}
\newcommand{\wt}{\widetilde}
\renewcommand{\emptyset}{\varnothing}
\renewcommand{\ge}{\geqslant}
\renewcommand{\le}{\leqslant}
\newcommand{\dnorm}{\mathcal{N}}
\newcommand{\dustd}{\mathbf{U}} 
\newcommand{\e}{{\mathbb{E}}} 
  \newcommand{\BIT}{\begin{itemize}}
\newcommand{\EIT}{\end{itemize}}
\newcommand{\BNUM}{\begin{enumerate}}
\newcommand{\ENUM}{\end{enumerate}}
\theoremstyle{definition}
\theoremstyle{plain}
\newtheorem{theorem}{Theorem}
\newtheorem{corollary}{Corollary}
\newtheorem{lemma}{Lemma}
\theoremstyle{definition}
\newcommand{\px}{{\mathcal{X}}}
\newcommand{\bv}{{\mathrm{BV}}}
\newcommand{\vit}{{\mathrm{V}_{\mathrm{IT}}}}
\newcommand{\vhk}{{\mathrm{V}_{\mathrm{HK}}}}
\newcommand{\bvhk}{{\mathrm{BVHK}}}
\newcommand{\simiid} {{\stackrel{\mathrm{iid}}\sim}}
\newcommand{\bsmu}{{\boldsymbol{\mu}}}
\newcommand{\bsnu}{{\boldsymbol{\nu}}}
\newcommand{\bslam}{{\boldsymbol{\lambda}}}
\newcommand{\bsell}{{\boldsymbol{\ell}}}
\newcommand{\bsk}{{\boldsymbol{k}}}
\newcommand{\calkl}{{\mathrm{KL}}}
\newcommand{\calklt}{{\widetilde{\mathrm{KL}}}}
\newcommand{\bsg}{{\boldsymbol{g}}}
\newcommand{\bszero}{{\boldsymbol{0}}}
\newcommand{\bsone}{{\boldsymbol{1}}}
\date{December 2015}
\author{Kinjal Basu\\Stanford University \and Art B. Owen\\Stanford University}
\title{Transformations and Hardy-Krause variation}
\begin{document}
\maketitle

\begin{abstract}
Using a multivariable Faa di Bruno formula we give conditions
on transformations $\tau:[0,1]^m\to\cx$ where
$\cx$ is a closed and bounded subset of $\real^d$ such
that $f\circ\tau$ is of bounded variation in the sense of Hardy
and Krause for all $f\in C^d(\cx)$. We give similar conditions for
$f\circ\tau$ to be smooth enough for scrambled net sampling
to attain $O(n^{-3/2+\epsilon})$ accuracy. Some popular symmetric transformations
to the simplex and sphere are shown to satisfy neither condition.
Some other transformations due to Fang and Wang (1993)\nocite{fang1993number} satisfy
the first but not the second condition.  We provide transformations
for the simplex that makes $f\circ\tau$ smooth enough to fully
benefit from scrambled net sampling for all $f$ in a class of generalized
polynomials. We also find sufficient conditions for the Rosenblatt-Hlawka-M\"uck
transformation in $\real^2$ and for importance sampling
to be of bounded variation in the sense of Hardy and Krause.
\end{abstract}

\section{Introduction}

Quasi-Monte Carlo (QMC) sampling is usually applied to integration problems over
the domain $[0,1]^d$.  Other domains, such as triangles, disks, simplices,
spheres, balls, et cetera are also of importance  in applications.
Monte Carlo (MC) sampling over such a domain $\px\subset\real^d$ 
is commonly done by finding a
uniformity preserving  transformation $\tau:[0,1]^m\to\px$. 
Such transformations yield $\bsx=\tau(\bsu)\sim\dustd(\px)$
when $\bsu\sim\dustd[0,1]^m$ so that
\begin{align}\label{eq:tauok}
\frac1{\vol(\px)}
\int_{\px}f(\bsx)\rd\bsx = \int_{[0,1]^m}f(\tau(\bsu))\rd\bsu. 
\end{align}
Then we estimate $\mu = \int_{\cx} f(\bsx)\rd\bsx$ by
$(\vol(\cx)/n)\sum_{i=1}^nf(\tau(\bsu_i))$ for $\bsu_i\simiid\,\dustd[0,1]^d$.
We will often take $\vol(\cx)=1$ for simplicity.

A very standard approach to QMC sampling of such domains is to 
employ the same transformation $\tau$ as in MC, but
to replace independent random $\bsu_i$ by QMC or randomized QMC (RQMC) points.
The uniformity-preserving transformation $\tau$ satisfies
equation~\eqref{eq:tauok} when $m=d$ and  $\tau$ has 
a Jacobian determinant everywhere equal to $1/\vol(\px)=1$. 
It also holds when that Jacobian determinant is piecewise constant 
and equal to $\pm1$ at all $\bsu$. 
Equation~\eqref{eq:tauok} does not require $m=d$. For instance,
in Section~\ref{sec:3to2},  we study 
a logarithmic mapping from $[0,1]^3$ to a two-dimensional equilateral 
triangle which satisfies~\eqref{eq:tauok}. 

When the function $f\circ\tau$ is of bounded variation 
in the sense of Hardy and Krause (BVHK), then the Koksma-Hlawka
inequality applies and QMC can attain the convergence rate
$O(n^{-1+\epsilon})$. Under additional smoothness conditions
on $f\circ \tau$, certain RQMC methods (scrambled nets) 
have a root mean squared error (RMSE) of
$O(n^{-3/2+\epsilon})$.

The paper proceeds as follows.
Section~\ref{sec:smoothness} gives a sufficient condition
for a function $f$ on $[0,1]^d$ to be in BVHK and a stronger
condition for that function to be integrable with RMSE $O(n^{-3/2+\epsilon})$
by scrambled nets. Those conditions are expressed in terms of
certain partial derivatives of $f$.
Section~\ref{sec:composition} considers how to apply the
conditions from Section~\ref{sec:smoothness} to compositions $f\circ\tau$.
There are good sufficient conditions for compositions of single variable
functions to be in BVHK, but the multivariate setting is more complicated
as shown by a counterexample there. Then we specialize a multivariable Faa di Bruno
formula from \cite{cons:savi:1996} to the mixed partial derivatives required
for QMC.
Section~\ref{sec:conditions} gives sufficient conditions for $f\circ\tau$
to be in BVHK and also for  it to be smooth enough for 
scrambled nets to improve on the QMC rate.  There is also a discussion
of necessary conditions. We stipulate there that at least
the components of $\tau$ should themselves be in BVHK.
Section~\ref{sec:counter} considers two widely used 
transformations $\tau$ that are symmetric operations on $d$
input variables to yield uniform points in the $d-1$ dimensional
simplex and sphere respectively. Unfortunately some components of $\tau$ fail to be in BVHK
for these transformations.
Section~\ref{sec:fangwang} shows that some classic mappings
to the simplex, sphere and ball from \cite{fang1993number}
are in BVHK, although they are not smooth enough to benefit from RQMC.
Section~\ref{sec:nonunif} considers non-uniform transformations including
importance sampling,  Rosenblatt-Hlawka-M\"uck sequential inversion,
and a non-uniform transformation on the unit simplex that yields the
customary RQMC convergence rate for a class of functions including all
polynomials on the simplex.

While finishing up this paper we noticed that
\cite{camb:hofe:lemi:2015} have also applied the Faa di Bruno
formula in a QMC application, though they apply it to
a different set of problems. They use it to give sufficient
conditions for some integrands with respect to copulas
to be in BVHK.  They extend \cite{hlaw:muck:1972} for inverse
CDF sampling to some copulas with
mixed partial derivatives that are singular on the boundaries of the unit cube.
They closely study the Marshall-Olkin algorithm which generates
points from a $d$ dimensional Archimedean copula from a point in $[0,1]^{d+1}$
and give conditions for quadrature errors to be bounded by 
a multiple of the $d+1$-dimensional discrepancy and weaker conditions
for a bound $\log(n)$ times as large as that.

\section{Smoothness conditions}\label{sec:smoothness}

Quasi-Monte Carlo sampling attains an error
rate of $O(n^{-1}(\log n)^{d-1})$, if the function $f\in\bvhk$.
Here we give a simply checked sufficient condition for $f\in\bvhk$.
We use $\vhk(f)$ for the total variation of $f$ in the sense
of Hardy and Krause and $\vit(f)$ for the total variation of $f$
in the sense of Vitali.

Let $1{:}d=\{1,2,\dots,d\}$.  For a set $u\subseteq1{:}d$,
let $|u|$ denote the cardinality of $u$ and $-u=1{:}d \setminus u$
its complement.
Let $\partial^u f$ denote the partial derivative of $f$
taken once with respect to each variable $j\in u$.
By convention $\partial^\emptyset f=f$.
For $\bsx\in[0,1]$ and $u\subseteq1{:}d$
let $\bsx_u{:}\bsone_{-u}$ be the point $\bsy\in[0,1]^d$
with $y_j=x_j$ for $j\in u$ and $y_j=1$ for $j\not\in u$.

If the mixed partial derivative $\partial^{1:d}f$ exists then
\begin{align}
\vit(f)& \le\int_{[0,1]^{d}} |\partial^{1{:}d} f(\bsx)|\rd\bsx,\quad\text{and}\label{eq:vitbound}\\
\vhk(f)& \le\sum_{u\ne\emptyset}\int_{[0,1]^{|u|}} |\partial^u f(\bsx_{u}{:}\bsone_{-u})|\rd\bsx_u. \label{eq:hkbound}
\end{align}
These and related results are presented in~\cite{variation}.
\cite{frec:1910} shows that
the Vitali bound~\eqref{eq:vitbound} becomes an equality if $\partial^{1{:}d}f$ is 
continuous on $[0,1]^d$.
The Hardy-Krause variation is a sum of Vitali variations for which
\eqref{eq:hkbound} arises by applying~\eqref{eq:vitbound} term by term.

For scrambled nets, a kind of RQMC, to attain a root mean
squared error of order $O(n^{-3/2}(\log n)^{(d-1)/2})$ the function $f$ must
be smooth in the following sense:
\begin{align}\label{eq:snetcond}
\Vert \partial^u f\Vert^2_2\equiv
\int (\partial^uf(\bsx))^2\rd\bsx <\infty,\quad\forall u\subseteq1{:}d.
\end{align}
For a description of digital nets including scramblings of them
see~\cite{dick:pill:2010}.
Two scramblings with RMSE of $O(n^{-3/2+\epsilon})$ are the nested uniform
scramble in \cite{rtms} and the nested linear scramble
of \cite{mato:1998}. Geometric nets and scrambled geometric nets have been introduced in \cite{basu2015scrambled}
for sampling uniformly on  $\cx^s$ where $\cx$ is a closed and bounded subset of $\real^d$.
Scrambled geometric nets attain an RMSE of $O(n^{-1/2 - 1/d} (\log n)^{(s - 1)/2})$ for certain smooth functions defined on $\cx^s$.
The construction of scrambled geometric nets is based on the recursive partitions used
by \cite{basu:owen:2015} to sample the triangle.

We will study transformations by considering which combinations
of $f$ and $\tau$ give $\vhk(f\circ\tau)<\infty$.  For such combinations
plain QMC will be asymptotically better than geometric nets when $s = 1$ and $d\ge3$.
Similarly, if $\partial^u (f\circ\tau)\in L^2$ for all $u\subseteq1{:}d$, then scrambled nets
are asymptotically better than geometric nets for $d\ge2$.

Higher order digital nets \citep{dick:mcqmc:2009} achieve even
better rates of convergence than plain (R)QMC does, 
but they require even stronger smoothness
conditions. Their randomized versions \citep{dick:2011} further increase
accuracy (in root mean square) under yet stronger smoothness
conditions.

\section{Function composition}\label{sec:composition}

We would like a condition under which
the composition $f\circ\tau:[0,1]^m\to\real^d\to\real$
is in BVHK.  
For the case $d=m=1$, BVHK for $f\circ\tau$ reduces to 
ordinary BV.
\cite{josephy1981composing}  
gives a very complete
characterization of when compositions of one dimensional
functions are in BV.

Let $f$ and $\tau$ be functions of bounded variation 
from $[0,1]$ to $[0,1]$. 
Theorem 4 of \cite{josephy1981composing}  
shows that $f\circ \tau\in\bv$ holds
for all $\tau\in\bv$ if and only if $f$ is Lipschitz. 
The statement on $\tau$ is a bit more complicated.
His Theorem 3 shows that $f\circ \tau\in \bv$
for all $f\in\bv$ if and only if $\tau$ belongs to a 
special subset of $\bv$, in which pre-images 
of intervals are unions of a finite set of intervals. 

\subsection{A counter-example}
No such comprehensive characterization is available for
bounded variation in the sense of Hardy and Krause in higher dimensions.
Here we present functions $f:\real^2\to\real$ and $\tau:\real^2\to\real^2$
such that $f$ is Lipschitz and $\tau\in\bvhk$ but $f\circ\tau\not\in\bvhk$.
We take $\tau$ to be the identity map on $[0,1]^2$ for which both
components are in BVHK.
Then we construct a Lipschitz function $f:[0,1]^2\to\real$ with
$f\circ\tau=f\not\in\bvhk$.

We define the function $f$ in a recursive way using a Sierpinsky gasket type splitting of the unit square.
Let  $A$ be the square $(x_1,x_1+\ell)\times (x_2,x_2+\ell)\subset[0,1]^2$
for some $\ell>0$.
Then for $\bsx'\in[0,1]^2$, define the pyramid function 
$$f_A(\bsx')=\max\Bigl(0,\frac\ell2-\max_{j=1,2}| x'_j-(x_j+\ell/2)|\Bigr).$$
This function is $0$ for $\bsx'\not\in A$ and inside $A$ it defines the upper
surface of a square based pyramid of height $\ell/2$ centered the midpoint of $A$.
For an illustration, see the lower right hand corner of the second
panel in Figure~\ref{fig:recursive}. For any $A$ the function $f_A$ is Lipschitz
continuous with Lipschitz constant $1$.

We construct $f$ as follows.  
First we split $[0,1]^2$ into four congruent
sub-squares as shown in the left panel of Figure~\ref{fig:recursive}.
Then we select one of those sub-squares, say $A_4$ and
initially set $f=f_{A_4}$.
Next, we partition each of the remaining three sub-squares $A_1,\dots,A_3$
into four congruent sub-sub-squares $A_{ij}$ for $i=1,\dots,3$ and $j=1,\dots,4$.
Then we add $f_{A_{1,4}}+f_{A_{2,4}} +f_{A_{3,4}}$ to $f$.
This construction is carried out recursively 
summing $3^{k}$ pyramidal functions at level $k=0,1,2,\dots$
over  squares of side $2^{-k-1}$,
as depicted in the right panel of Figure~\ref{fig:recursive}.

\begin{lemma}
The function $f$ described above has Lipschitz constant one 
and has infinite Vitali variation and hence infinite variation
in the sense of Hardy and Krause.
\end{lemma}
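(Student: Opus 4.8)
The plan is to treat the two assertions separately, exploiting three structural features of the construction: the selected squares carrying the pyramids have pairwise disjoint interiors; at level $k$ there are $3^k$ of them, each of side $2^{-k-1}$; and each summand $f_A$ vanishes on $\partial A$, so $f$ coincides with $f_A$ on all of $\overline A$ (no higher-level pyramid ever lands inside a previously selected square). The Lipschitz claim will follow from a direct geometric estimate, while the infinite-variation claim is driven by two facts: a single pyramid of side $\ell$ has Vitali variation exactly proportional to $\ell$, and Vitali variation is additive over the disjoint supports here, so $\vit(f)$ dominates the sum of the pyramids' variations.

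For the Lipschitz bound I would first rewrite each summand, for $\bsx\in A$ with centre $\bsc$, as $f_A(\bsx)=\tfrac{\ell}{2}-\|\bsx-\bsc\|_\infty$; this is exactly the $\ell^\infty$-distance from $\bsx$ to the complement $A^c$, and it is $0$ off $A$. Given $\bsx,\bsy\in[0,1]^2$, the only nontrivial case is $\bsx$ in some support $A$ and $\bsy\notin A$. The segment from $\bsx$ to $\bsy$ meets $\partial A$ at some $\bsz\in A^c$, and since the $\ell^\infty$-norm is monotone along segments, $f_A(\bsx)=d_\infty(\bsx,A^c)\le\|\bsx-\bsz\|_\infty\le\|\bsx-\bsy\|_\infty$. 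As $f(\bsy)$ is either $0$ or the nonnegative value of a different pyramid, $|f(\bsx)-f(\bsy)|\le f_A(\bsx)\le\|\bsx-\bsy\|_\infty\le\|\bsx-\bsy\|_2$, so $f$ is $1$-Lipschitz.

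For the variation I would compute the Vitali variation of the unit pyramid $p$ on $[0,1]^2$. Since the four faces are affine, the distributional mixed derivative $\partial^{1{:}2}p$ is supported on the two diagonals of the square, and a short calculation gives $\vit(p)=\|\partial^{1{:}2}p\|_{\mathrm{TV}}=V_0$ for an explicit $V_0>0$. A side-$\ell$, height-$\ell/2$ pyramid is $\ell\,p(\cdot/\ell)$ up to translation, and $\int|\partial^{1{:}2}(\ell\,p(\cdot/\ell))|=\ell\,V_0$, so every level-$k$ pyramid has Vitali variation $2^{-k-1}V_0$.

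Finally I would lower-bound $\vit(f)$ by evaluating the alternating-sum functional on a fine dyadic grid $G_N$. Because every support square is dyadic and $f=f_A$ on $\overline A$, each grid cell lying inside a selected square $A$ contributes exactly the pyramid increment, cells in the region where $f\equiv0$ contribute $0$, and for dyadic $G_N$ no cell straddles a support boundary. Hence, fixing a level $M$ and letting $N\to\infty$,
\[
\vit(f)\ge\sum_{k=0}^{M}3^k\cdot 2^{-k-1}V_0=\frac{V_0}{2}\sum_{k=0}^{M}\Bigl(\tfrac32\Bigr)^k,
\]
which diverges as $M\to\infty$; thus $\vit(f)=\infty$, and since the $u=1{:}2$ term in the Hardy--Krause decomposition \eqref{eq:hkbound} is precisely $\vit(f)$, we conclude $\vhk(f)\ge\vit(f)=\infty$. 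The main obstacle is this last additivity step: Vitali variation is a supremum over partitions and is not additive in general, so the argument hinges on the disjointness of the dyadic supports together with each pyramid vanishing on its own boundary, which is exactly what prevents cancellation between cells at different scales and lets a single grid simultaneously witness all the pyramids' variations.
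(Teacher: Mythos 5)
Your proof is correct and follows the same overall plan as the paper's: establish the Lipschitz bound by elementary geometry, lower-bound the variation of each pyramid by a constant times its side length, add these contributions over the pairwise disjoint supports, and observe that $\sum_k 3^k\,2^{-k-1}$ diverges; finally, $\vhk(f)\ge\vit(f)$ because the $u=1{:}2$ term in the definition of Hardy--Krause variation is exactly the Vitali variation. The differences lie in how the two sub-steps are discharged. For the per-pyramid variation the paper simply evaluates alternating sums on the $3\times3$ grid through the edges and center of $A$, getting $\vit(f_A)\ge 2\ell$; this is more elementary than your computation of the total variation of the distributional mixed derivative concentrated on the diagonals (your $V_0$ equals $2$, so the two bounds agree). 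For additivity, the paper cites a general result (Vitali variation is additive over a square subpartition), whereas you make this step self-contained by exhibiting one dyadic grid that witnesses all pyramids at levels $\le M$ simultaneously; this works precisely because the supports are dyadic squares and each $f_A$ vanishes on $\partial A$, as you note, so your version buys independence from the cited reference. A simplification available to you: since refining a grid can only increase the alternating-sum total (split a cell and apply the triangle inequality), you do not need the limit $N\to\infty$; any fixed $N\ge M+2$ suffices. Two small inaccuracies, neither fatal: in the Lipschitz step, the inequality $|f(\bsx)-f(\bsy)|\le f_A(\bsx)$ requires WLOG $f(\bsx)\ge f(\bsy)$ (otherwise bound by the pyramid value at $\bsy$ and swap roles), and the case of $\bsx,\bsy$ in the same square needs the reverse triangle inequality for $\Vert\cdot\Vert_\infty$; in the grid step, cells outside the level-$\le M$ squares need not lie where $f\equiv 0$ (they can meet deeper pyramids), but their contributions are nonnegative, so your lower bound is unaffected.
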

\begin{proof}
Let $\bsx, \bsy \in [0,1]^2$. Consider the function $g(t) = f(\bsx + t(\bsy-\bsx))$
on $0\le t\le 1$.  This function is continuous and piece-wise linear with
absolute slope at most $1$.  Thus $|f(\bsx)-f(\bsy)|\le\Vert\bsx-\bsy\Vert$
and so $f$ is Lipschitz with constant~$1$.

Now we turn to variation using definitions and results from~\cite{variation}.
The Vitali variation of $f_A$ 
equals the Vitali variation of $f_A$ over the square $A$. 
By considering a $3\times 3$ grid covering the edges
and center of $A$ we find that $\vit(f_A)\ge 2\ell$.
In fact, $\vit(f_A)=2\ell$ but we only need the lower bound.

The Vitali variation of $f$ is the sum of its Vitali variations
over a square subpartition.  As a result, $\vit(f)$ is the sum
of $\vit(f_A)$ for all the sets $A$ in our recursive construction.
For $k=0,1,2,\dots$ there are $3^k$ terms $f_A$ with $\ell = 2^{-k-1}$.
As a result the Vitali variation of $f$ is
at least $\sum_{k=0}^\infty 3^k2^{-k}=\infty$.
\end{proof}

\begin{figure}[t]
\label{fig:recursive}
\begin{center}
\includegraphics[width = \linewidth]{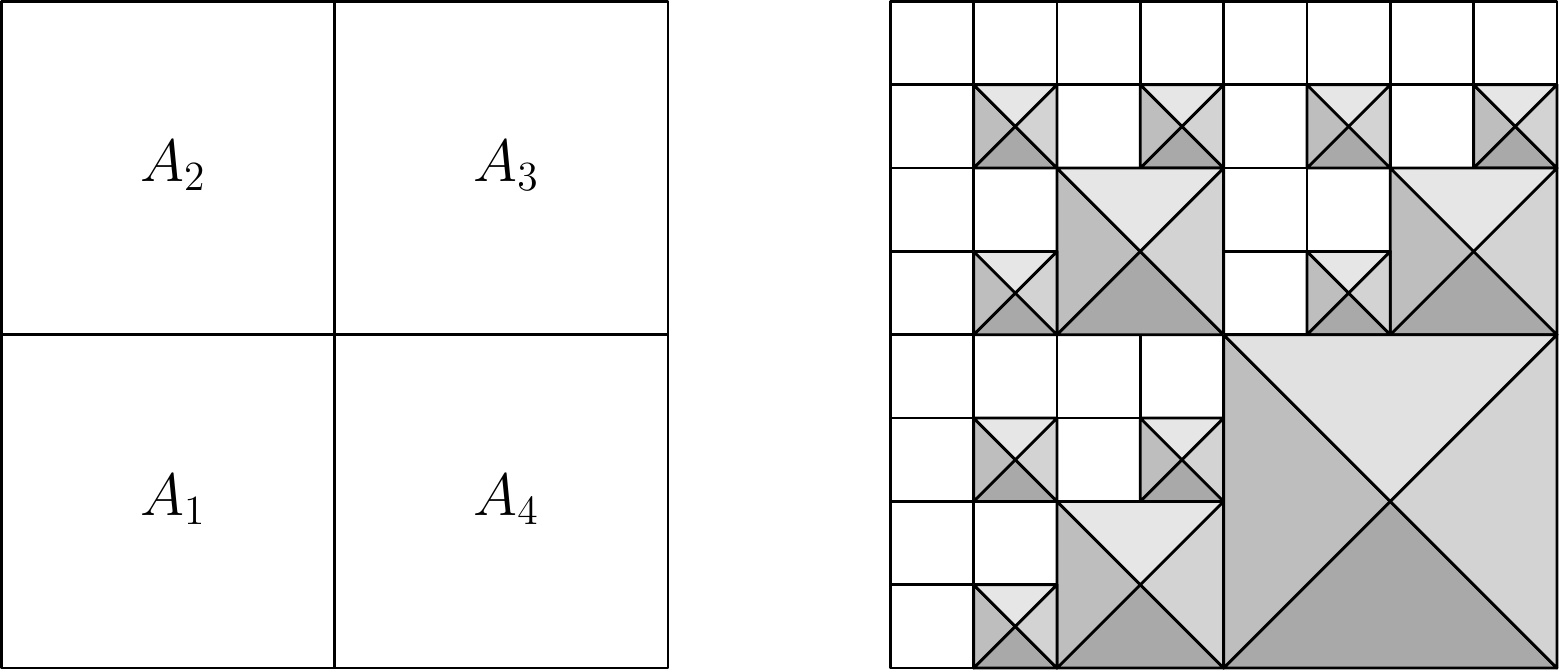}
\caption{The plot on the left shows the square partition $\mathcal{P}$ which is repeated in a recursive manner. The right figure shows the function as a 2-dimensional projection for $k = 3$. Each such pyramidal structure has a height of half the length of its base square.}
\end{center}
\end{figure}

This counterexample applies to any $d\ge2$ simply by constructing
a function that equals the above constructed function $f$ applied to
two of its input variables. Such a function has infinite Hardy-Krause
variation arising from the Vitali variation in those two variables.
As a result, even if $f$ is in Lipschitz and $\tau$ is BVHK along with every component, 
we might still have $f\circ\tau\not\in\bvhk$.

\subsection{Faa di Bruno formulas}

We will study variation via a mixed partial derivative
of the composition of the integrand on $\px$ with
a transformation from the unit cube to $\px$.
We need partial derivatives of order up to the
dimension of the unit cube. High order derivatives
of a composition become awkward even in the case with $d=m=1$,
which was solved by \cite{faad:1855}.
We will use a multivariable Faa di Bruno formula from
\cite{cons:savi:1996}.

To remain consistent with the notation in
\cite{cons:savi:1996} we will consider functions
$h = f(g(\cdot))$ here.  After obtaining the formulas
we need, we will revert to $f(\tau(\cdot))$ that
is more suitable for the MC and QMC context.

To illustrate Faa di Bruno suppose first that $d=m=1$.
Then let $g$ be defined on an open set containing $x_0$
and have derivatives up to order $n$ at $x_0$. 
Let $f$ be defined on an open set containing $y_0=g(x_0)$
and have derivatives up to order $n$ at $y_0$. 
For $0\le k\le n$ define derivatives 
$f_k=\mrd^kf(y_0)/\mrd y^k$, $g_k=\mrd^kg(x_0)/\mrd x^k$
and $h_k=\mrd^k h(x_0)/\mrd x^k$.  
From the chain rule we can easily find that
\begin{align}\label{eq:chain4}
h_4 = f_4g_1^4 + 6f_3g_1^2g_2 + 3f_2g_2^2+4f_2g_1g_3 + f_1g_4.
\end{align}
The derivative $f_k$  appears in $h_n$ in as many terms as there are distinct ways
of finding $k$ positive integers that sum to $n$. That number of terms
is known as the Stirling number of the second kind
\citep{grahamconcrete}. These Stirling numbers sum to the $n$'th Bell number
which grows rapidly with $n$.
We omit the  $m=d=1$ Faa di Bruno formula for arbitrary $n$ and present
instead the generalization due to \cite{cons:savi:1996}.

In the multivariate setting, $h(\bsx) = f(\bsg(\bsx))$
where $\bsx\in\real^m$ and $\bsy = \bsg(\bsx)\in\real^d.$
In our applications $\bsx\in[0,1]^m$. 
We write 
$\bsg(\bsx) = (g^{(1)}(\bsx),\dots,g^{(d)}(\bsx))$. 
The multivariate Faa di Bruno formula gives 
an arbitrary mixed partial derivative of $h$
with respect to components of $\bsx$ in terms 
of partial derivatives of $f$ and $g^{(i)}$. 
The formula requires that the needed derivatives exist. 

The formula uses some multi-index notation. 
We use $\natu_0$ for the set of non-negative integers.
Let $\bsnu =(\nu_1,\dots,\nu_m)\in\natu_0^m$.
Then $h_\bsnu$ is the derivative of $h$ taken $\nu_i$
times with respect to $x_i$. Similarly $f$ and $g^{(i)}$ subscripted 
by tuples of $d$ and $m$ nonnegative integers respectively, are 
the corresponding partial derivatives.  When the subscript is 
all zeros, the result is the function itself, undifferentiated. 

For a multi-index $\bsnu\in\natu_0^m$ we write 
$|\bsnu|=\sum_{i=1}^m\nu_i$ and 
$\bsnu! = \prod_{i=1}^m(\nu_i!)$. 
For $\bsz\in\real^m$ and $\bsnu\in\natu_0^m$ we write 
$\bsz^\bsnu$ for $\prod_{i=1}^m z_i^{\nu_i}$. 
We use an ordering $\prec$ on $\natu_0^m$ where 
$\bsmu \prec\bsnu$ means that either 
$|\bsmu|<|\bsnu|$, or $|\bsmu| = |\bsnu|$ holds 
along with $\mu_i<\nu_i$ at the smallest $i$ where 
$\mu_i\ne\nu_i$. Multi-indices in $\natu_0^d$ are treated the same way. The quantity $\bsg_{\bsell}$ is the vector 
$(g^{(1)}_\bsell,\dots,g^{(d)}_\bsell)$. 


\begin{theorem}\label{thm:fdb}
Let $g^{(i)}_\bsmu$ be continuous in a neighborhood of 
$\bsx_0\in\real^m$ for all $|\bsmu|\le|\bsnu|$ and 
all $i=1,\dots,d$, where $\bsmu,\bsnu\in\natu_0^m-\{\bszero\}$. 
Similarly, let $f_\bslam$ be continuous in a neighborhood 
of $\bsy_0 = \bsg(\bsx_0)$ for all $|\bslam|\le|\bsnu|$. 
Then in a neighborhood of $\bsx_0$,
\begin{align}\label{eq:consavfdb} 
h_\bsnu &= \bsnu!
\sum_{1\le|\bslam|\le|\bsnu|}f_\bslam \sum_{s=1}^{|\bsnu|}
\sum_{\calkl(s,\bsnu,\bslam)} \prod_{r=1}^s\frac{ [\bsg_{\bsell_r}]^{\bsk_r}}
{(\bsk_r!)[\bsell_r!]^{|\bsk_r|}},
\end{align} 
where
\begin{align}\label{eq:defkl}
\begin{split}
\calkl(s,\bsnu,\bslam) 
& = \Bigl\{ 
(\bsk_1,\dots,\bsk_s,
\bsell_1,\dots,\bsell_s)\in\bigl(\natu_0^d-\{\bszero\}\bigr)^s\times\bigl(\natu_0^m-\{\bszero\}\bigr)^s 
\mid\\
&\qquad \bsell_1\prec\bsell_2\prec\dots\prec\bsell_s,\
\sum_{r=1}^s\bsk_r=\bslam,\ \text{and}\ 
\sum_{r=1}^s|\bsk_r|\bsell_r=\bsnu 
\Bigr\}, 
\end{split}
\end{align}
\end{theorem}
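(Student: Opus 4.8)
The plan is to reduce the claim to an exact computation with polynomials and then extract the coefficient of a single monomial. The key preliminary observation is that the right-hand side of \eqref{eq:consavfdb} depends only on the derivatives $f_\bslam$ for $|\bslam|\le|\bsnu|$ and $g^{(i)}_\bsell$ for $|\bsell|\le|\bsnu|$, so it suffices to compute $h_\bsnu(\bsx_0)$ when $f$ and each $g^{(i)}$ are replaced by their degree-$|\bsnu|$ Taylor polynomials at $\bsy_0$ and $\bsx_0$ respectively. First I would prove the reduction lemma: if $\widetilde g^{(i)}$ and $\widetilde f$ are those Taylor polynomials and $\widetilde h=\widetilde f\circ\widetilde{\bsg}$, then $h$ and $\widetilde h$ have identical partial derivatives up to order $|\bsnu|$ at $\bsx_0$. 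This follows by writing $h-\widetilde h=[f(\bsg)-f(\widetilde{\bsg})]+[f(\widetilde{\bsg})-\widetilde f(\widetilde{\bsg})]$; the first bracket is $o(\|\bsx-\bsx_0\|^{|\bsnu|})$ because $f\in C^{|\bsnu|}$ is locally Lipschitz and $\bsg-\widetilde{\bsg}=o(\|\bsx-\bsx_0\|^{|\bsnu|})$, while the second is $o(\|\bsx-\bsx_0\|^{|\bsnu|})$ because $f-\widetilde f=o(\|\cdot-\bsy_0\|^{|\bsnu|})$ (Peano remainder) and $\widetilde{\bsg}(\bsx)-\bsy_0=O(\|\bsx-\bsx_0\|)$. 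Since $h-\widetilde h\in C^{|\bsnu|}$ and vanishes to order $|\bsnu|$ at $\bsx_0$, all its partials up to order $|\bsnu|$ vanish there, giving the reduction and, crucially, ensuring no infinite series ever needs to converge.

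With $f$ and $\bsg$ polynomial I would compute through Taylor coefficients. Writing $g^{(i)}(\bsx)-g^{(i)}(\bsx_0)=\sum_{\bsell\succ\bszero}(g^{(i)}_\bsell/\bsell!)(\bsx-\bsx_0)^\bsell$ and $f(\bsy)=\sum_\bslam (f_\bslam/\bslam!)(\bsy-\bsy_0)^\bslam$, substitution gives
\begin{align*}
h(\bsx)=\sum_{\bslam}\frac{f_\bslam}{\bslam!}\prod_{i=1}^d\Bigl(\sum_{\bsell\succ\bszero}\frac{g^{(i)}_\bsell}{\bsell!}(\bsx-\bsx_0)^\bsell\Bigr)^{\lambda_i}.
\end{align*}
Because $h_\bsnu(\bsx_0)=\bsnu!\times[\text{coefficient of }(\bsx-\bsx_0)^\bsnu]$, I only need that coefficient. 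The $\bslam=\bszero$ term is constant and drops out, and since every inner factor contributes degree at least $1$ in $\bsx-\bsx_0$, the product of $|\bslam|$ such factors cannot reach total degree $|\bsnu|$ unless $|\bslam|\le|\bsnu|$; this produces the range $1\le|\bslam|\le|\bsnu|$. Expanding the $i$-th factor by the multinomial theorem introduces occupation counts $c^{(i)}_\bsell\ge0$ with $\sum_\bsell c^{(i)}_\bsell=\lambda_i$ and contributes the monomial $(\bsx-\bsx_0)^{\sum_{i,\bsell}c^{(i)}_\bsell\bsell}$, so retaining only those configurations with $\sum_{i,\bsell}c^{(i)}_\bsell\bsell=\bsnu$ isolates the target coefficient.

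The crux, and the step I expect to be the main obstacle, is the combinatorial regrouping that converts this multinomial sum into the sum over $\calkl(s,\bsnu,\bslam)$. For each configuration $\{c^{(i)}_\bsell\}$ I would list the distinct multi-indices actually used, order them as $\bsell_1\prec\dots\prec\bsell_s$, and set $\bsk_r=(c^{(1)}_{\bsell_r},\dots,c^{(d)}_{\bsell_r})\in\natu_0^d-\{\bszero\}$. The two multinomial constraints then read exactly $\sum_r\bsk_r=\bslam$ and $\sum_r|\bsk_r|\bsell_r=\bsnu$, and the strict ordering $\prec$ makes the correspondence between configurations and tuples in $\calkl(s,\bsnu,\bslam)$ a bijection with no double counting. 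The remaining work is bookkeeping: the multinomial coefficients $\prod_i\lambda_i!/\prod_\bsell c^{(i)}_\bsell!$ combine to $\bslam!$, cancelling the $1/\bslam!$ from the expansion of $f$; and the factors $(g^{(i)}_\bsell/\bsell!)^{c^{(i)}_\bsell}$ regroup as $\prod_r [\bsg_{\bsell_r}]^{\bsk_r}/(\bsk_r!\,(\bsell_r!)^{|\bsk_r|})$ upon recognizing $\prod_i(g^{(i)}_{\bsell_r})^{(\bsk_r)_i}=[\bsg_{\bsell_r}]^{\bsk_r}$, $\prod_i(\bsk_r)_i!=\bsk_r!$, and $\prod_i(\bsell_r!)^{(\bsk_r)_i}=(\bsell_r!)^{|\bsk_r|}$. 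Multiplying by the outer $\bsnu!$ yields \eqref{eq:consavfdb}. The delicate points to watch are maintaining the $\prec$-ordering to prevent overcounting and confirming that finite smoothness, rather than analyticity, is fully absorbed by the reduction lemma.
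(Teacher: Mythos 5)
Your proposal is correct, but it takes a genuinely different route from the paper, which in fact offers no proof at all: the paper disposes of Theorem~\ref{thm:fdb} by citing Theorem~1 of \citet{cons:savi:1996}. Your argument is a self-contained derivation by the classical Taylor-coefficient route: (i) a reduction lemma showing that replacing $f$ and each $g^{(i)}$ by their degree-$|\bsnu|$ Taylor polynomials leaves all partial derivatives of the composition up to order $|\bsnu|$ unchanged at $\bsx_0$, and (ii) exact extraction of the coefficient of $(\bsx-\bsx_0)^\bsnu$ from the polynomial composition via the multinomial theorem, with the $\prec$-ordered tuples of $\calkl(s,\bsnu,\bslam)$ arising as a bijective re-indexing of the multinomial occupation counts $c^{(i)}_\bsell$. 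The steps check out: the Peano-remainder estimates in (i) are valid under the stated $C^{|\bsnu|}$ hypotheses (no analyticity is needed, and no infinite series appears); a $C^{|\bsnu|}$ function that is $o(\|\bsx-\bsx_0\|^{|\bsnu|})$ indeed has all partials up to order $|\bsnu|$ vanishing at $\bsx_0$ (its Taylor polynomial would otherwise contradict the little-$o$ bound); and the bookkeeping $\prod_i\lambda_i!=\bslam!$, $\prod_i(g^{(i)}_{\bsell_r})^{k_{ri}}=[\bsg_{\bsell_r}]^{\bsk_r}$, $\prod_i k_{ri}!=\bsk_r!$, $\prod_i(\bsell_r!)^{k_{ri}}=[\bsell_r!]^{|\bsk_r|}$ reproduces \eqref{eq:consavfdb} exactly, including the range $1\le|\bslam|\le|\bsnu|$ and $1\le s\le|\bsnu|$. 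Two small points to tighten: the theorem asserts the identity in a neighborhood of $\bsx_0$, not only at $\bsx_0$, but your argument applies verbatim at every nearby point since the hypotheses are open conditions; and the claim $h-\widetilde h\in C^{|\bsnu|}$ silently uses the standard fact that a composition of $C^n$ maps is $C^n$, which you should invoke explicitly (it is independent of the formula being proved, so there is no circularity). As for what each approach buys: the paper's citation keeps its exposition short and defers correctness to the literature, whereas your proof makes the result self-contained, shows precisely where each constraint defining $\calkl$ comes from, and confirms that finite smoothness of the stated order suffices.
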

\begin{proof}
\citet[Theorem 1]{cons:savi:1996}.
\end{proof}

For our purposes of comparing geometric
nets to (R)QMC, we only need $h_\bsnu$
with $\bsnu\in\{0,1\}^m-\{\bszero\}$.
Equations~\eqref{eq:consavfdb} and~\eqref{eq:defkl} simplify.
\begin{lemma}\label{lem:simplifyfdb}
For $m\ge1$ let $\bsnu\in\{0,1\}^m-\{\bszero\}$.
If $1\le|\bslam|\le|\bsnu|$ and $1\le s\le |\bsnu|$
and $(\bsk_1,\dots,\bsk_s,\bsell_1,\dots,\bsell_s)\in 
\calkl=\calkl(s,\bsnu,\bslam)$, then  for $r=1,\dots,s$,
$\bsk_r\in\{0,1\}^d-\{\bszero\}$,
$\bsell_r\in\{0,1\}^m-\{\bszero\}$ and $|\bsk_r|=1$.
Also $\bsnu!=\bsk_r!=\bsell_r!=1$.
\end{lemma}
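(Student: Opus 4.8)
The plan is to extract the entire conclusion from the single vector constraint $\sum_{r=1}^s|\bsk_r|\bsell_r=\bsnu$ in the definition of $\calkl(s,\bsnu,\bslam)$ in~\eqref{eq:defkl}, read off one coordinate at a time. Writing $\nu_j$, $(\bsk_r)_i$, and $(\bsell_r)_j$ for the components, the $j$-th coordinate of that constraint is $\sum_{r=1}^s|\bsk_r|(\bsell_r)_j=\nu_j$. Since each $\bsk_r\in\natu_0^d-\{\bszero\}$ we have $|\bsk_r|\ge1$, and since each $\bsell_r\in\natu_0^m-\{\bszero\}$ we have $(\bsell_r)_j\ge0$, so the left side is a sum of nonnegative integers. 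The hypothesis $\bsnu\in\{0,1\}^m$ forces $\nu_j\le1$, whence every individual summand $|\bsk_r|(\bsell_r)_j$ is at most $1$.

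First I would establish $|\bsk_r|=1$ for every $r$. Fix $r$; because $\bsell_r\ne\bszero$ there is a coordinate $j$ with $(\bsell_r)_j\ge1$. For this $j$ the term $|\bsk_r|(\bsell_r)_j$ is one of the nonnegative summands adding to $\nu_j\le1$, hence $|\bsk_r|(\bsell_r)_j\le1$; together with $|\bsk_r|\ge1$ and $(\bsell_r)_j\ge1$ this forces $|\bsk_r|(\bsell_r)_j=1$, so $|\bsk_r|=1$ (and $(\bsell_r)_j=1$). A multi-index $\bsk_r\in\natu_0^d$ with $|\bsk_r|=1$ has one entry equal to $1$ and the rest $0$, so $\bsk_r\in\{0,1\}^d-\{\bszero\}$ and $\bsk_r!=1$.

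With $|\bsk_r|=1$ for all $r$ in hand, the coordinate constraint collapses to $\sum_{r=1}^s(\bsell_r)_j=\nu_j\in\{0,1\}$. I would then conclude that each $(\bsell_r)_j$, being a nonnegative integer whose sum over $r$ is at most $1$, lies in $\{0,1\}$; hence $\bsell_r\in\{0,1\}^m-\{\bszero\}$ (still nonzero by membership in $\calkl$) and $\bsell_r!=1$. Finally $\bsnu!=\prod_j\nu_j!=1$ because every $\nu_j\in\{0,1\}$. I do not anticipate a genuine obstacle: the statement is a bookkeeping consequence of reading~\eqref{eq:defkl} coordinatewise, and the only point needing a little care is invoking $\bsell_r\ne\bszero$ to locate a coordinate that pins down $|\bsk_r|=1$ before the bound on the entries of $\bsell_r$ can be read off cleanly.
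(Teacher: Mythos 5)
Your proof is correct and follows essentially the same route as the paper's: both extract everything from the constraint $\sum_{r=1}^s|\bsk_r|\bsell_r=\bsnu$ in~\eqref{eq:defkl}, using that $\bsnu$ is binary, $|\bsk_r|\ge1$, and $\bsell_r\ne\bszero$. The only difference is cosmetic ordering (you pin down $|\bsk_r|=1$ before bounding the entries of $\bsell_r$, while the paper does the reverse), and your explicit coordinatewise reading is a slightly more careful write-up of the same bookkeeping.
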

\begin{proof}
Definition~\eqref{eq:defkl} of $\calkl$ includes the  
condition $\sum_{r=1}^s|\bsk_r|\bsell_r=\bsnu$.  
Because $\bsnu$ is a binary vector and $|\bsk_r|\ge1$, 
no component of $\bsell_r$ can be larger than $1$.  
Therefore $\bsell_r\in\{0,1\}^m-\{\bszero\}$.  
Similarly $\bsell_r$ has at least one nonzero component  
and so $|\bsk_r|\le1$. Because $\bsk_r\ne\bszero$ we now have  
$|\bsk_r|=1$.  Finally, the factorial of any binary vector is $1$.
\end{proof}

It follows from Lemma~\ref{lem:simplifyfdb}
that for $\bsnu\in\{0,1\}^m-\{\bszero\}$,
\begin{align}\label{eq:consavfdbbinary} 
h_\bsnu &= 
\sum_{1\le|\bslam|\le|\bsnu|} f_\bslam \sum_{s=1}^{|\bsnu|}
\sum_{\calkl(s,\bsnu,\bslam)} \prod_{r=1}^s
[\bsg_{\bsell_r}]^{\bsk_r}.  
\end{align} 
Next, we use Lemma~\ref{lem:simplifyfdb} to simplify
the derivatives of $g$.
Because $\bsnu$ is a nonzero binary vector, we can identify it
with a nonempty subset $v$ of $1{:}m$.
Specifically, $j\in v$ if and only if $\nu_j=1$.
Similarly we may identify the binary vector 
$\bsell_r\in\{0,1\}^m-\{\bszero\}$
with the set $\ell_r\subseteq1{:}m$.
The nonzero binary vector $\bsk_r\in\{0,1\}^d$ corresponds
to a singleton set.  We can therefore identify it with
an integer in $1{:}d$. We identify $\bsk_r$ with the integer $k_r$
such that $\bsk_{ri}=1$ if and only if $i=k_r$.

With this identification, 
\begin{align}\label{eq:thegprods}
[\bsg_{\bsell_r}]^{\bsk_r} 
= 
\prod_{i=1}^d\biggl(\frac{\partial^{|\bsell_r|} g^{(i)}}{
\prod_{r=1}^s
\partial x_r^{\ell_{jr}}}\biggr)^{k_{ri}}
= 
\prod_{i=1}^d\bigl(
\partial^{\ell_r} g^{(i)}\bigr)^{k_{ri}}
= 
\partial^{\ell_r} g^{(k_r)}.
\end{align}
Now switching from $g^{(k)}$ back to $\tau_k$ we get a Faa di Bruno
formula for mixed partial derivatives taken at most once with respect
to every index:
\begin{align}\label{eq:consavfdbbinaryreduced}
\partial^v (f\circ \tau) &= 
\sum_{\bslam\in\natu_0^m\atop
1\le|\bslam|\le|v|} f_\bslam \sum_{s=1}^{|v|}
\sum_{(\ell_r,k_r)\in {\calklt}(s,v,\bslam)} 
\prod_{r=1}^s
\partial^{\ell_r} \tau_{k_r}
\end{align}
where $\calklt(s,v,\bslam)$ equals
\begin{align}\label{eq:defklt}
\begin{split}
&\Bigl\{ (\ell_r,k_r),\ r=1,\dots,s,\ \Bigm|
\ell_r\subseteq1{:}m,\ 
k_r\in1{:}d,\ \cup_{r=1}^s\ell_r=v,\\
&\qquad 
\ell_r\cap\ell_{r'} =\emptyset
\ \text{for}\ r\ne r'\ 
\text{and}\ |\{j\in1{:}s\mid k_j=i\}|=\lambda_i
\Bigr\}. 
\end{split}
\end{align}

\section{Necessary and sufficient conditions}\label{sec:conditions}

Equation~\eqref{eq:consavfdbbinaryreduced}
allows us to find sufficient conditions on $f$ and $\tau$
so that $f\circ\tau\in\bvhk$. 
Uniformity transformations $\tau$ satisfy~\eqref{eq:tauok}.
We don't need that condition for $f\circ\tau\in\bvhk$ but we do
need it to ensure that averages of $f\circ\tau$ over $[0,1]^m$
estimate $\vol(\px)^{-1}\int_{\px}f(\bsx)\rd\bsx$.
We find conditions on $\tau$ so that $f\circ\tau\in\bvhk$ 
for all $f\in C^m(\px)$. Similarly we find 
conditions under which $f\circ \tau$ is smooth in the sense of
equation~\eqref{eq:snetcond} for all $f\in C^m(\px)$.

We will use a generalized H\"older inequality
\cite[page 141]{boga:2007}.  
For a positive integer $s$
suppose that $f_r\in L^{p_r}(\mu)$ for $r=1,\dots,s$, for some nonnegative measure $\mu$,
and that $\sum_{r=1}^s1/p_r = 1/p$. Then 
$
(\int |\prod_{r=1}^sf_r|^p\rd\mu)^{1/p}
\le \prod_{r=1}^s(\int |f_r|^{p_r}\rd\mu)^{1/p_r}
$
and so  $\prod_{r=1}^sf_r\in L^p(\mu)$.

\begin{theorem}\label{thm:bvhk}
Let $\tau(\bsu) = (\tau_1(\bsu) , \ldots, \tau_d(\bsu))$ be a map from $[0,1]^m$ to 
the closed and bounded set $\px\subset\real^d$ such 
that $\partial^{1:m}\tau_j$ exists for all $j=1,\dots,d$. 
Assume that 
$\partial^v \tau_j(\bsu_v{:}\bsone_{-v}) \in L^{p_j}\bigl([0,1]^{|v|}\bigr)$ 
for all $j = 1, \ldots, d$ and for all nonempty $v \subseteq 1{:}m$,
where $p_j\in[1,\infty]$. If $\sum_{j=1}^d1/p_j\le1$
then $f\circ\tau\in\bvhk$ for all $f\in C^d(\px)$. 
\end{theorem}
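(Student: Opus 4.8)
The plan is to control $\vhk(f\circ\tau)$ by the derivative bound \eqref{eq:hkbound} applied to the composite on $[0,1]^m$,
\[
\vhk(f\circ\tau)\le \sum_{v\ne\emptyset}\int_{[0,1]^{|v|}}\bigl|\partial^v(f\circ\tau)(\bsu_v{:}\bsone_{-v})\bigr|\rd\bsu_v ,
\]
and, since there are only finitely many $v\subseteq 1{:}m$, to show that each summand is finite; that is, that $\partial^v(f\circ\tau)$ restricted to the face $\{\bsu_v{:}\bsone_{-v}\}$ lies in $L^1$. First I would substitute the binary Faa di Bruno expansion \eqref{eq:consavfdbbinaryreduced}, which expresses $\partial^v(f\circ\tau)$ as a finite sum of terms $f_\bslam\prod_{r=1}^s\partial^{\ell_r}\tau_{k_r}$ in which the blocks $\ell_1,\dots,\ell_s$ partition $v$ and each $k_r\in1{:}d$. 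It then suffices to bound a single such term in $L^1$.

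The coefficient is harmless: every partial derivative of $f$ that appears in \eqref{eq:consavfdbbinaryreduced} is continuous on the compact set $\px$—this is where the smoothness of $f$ is used—and hence uniformly bounded there by some constant $M$. Pulling $M$ out, the problem reduces to showing
\[
\int_{[0,1]^{|v|}}\ \prod_{r=1}^s\bigl|\partial^{\ell_r}\tau_{k_r}(\bsu_v{:}\bsone_{-v})\bigr|\ \rd\bsu_v<\infty .
\]

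The hard part will be that each factor is evaluated on the full $|v|$-dimensional face $\{\bsu_v{:}\bsone_{-v}\}$ yet is differentiated only in the directions $\ell_r\subseteq v$, whereas the hypothesis controls $\partial^{\ell_r}\tau_{k_r}$ only on its own, possibly smaller, face $\{\bsu_{\ell_r}{:}\bsone_{-\ell_r}\}$. To reconcile the two, I would reconstruct the factor on $\{\bsu_v{:}\bsone_{-v}\}$ from higher mixed derivatives by applying the fundamental theorem of calculus, anchored at $\bsone$, in the undifferentiated directions $v\setminus\ell_r$:
\[
\partial^{\ell_r}\tau_{k_r}(\bsu_v{:}\bsone_{-v})
=\sum_{w\subseteq v\setminus\ell_r}(-1)^{|w|}\int_{[\bsu_w,\bsone]}\partial^{\ell_r\cup w}\tau_{k_r}\,\rd\boldsymbol{t}_w ,
\]
the inner integral running over the $w$-coordinates (from $\bsu_w$ up to $\bsone$) with the coordinates of $(v\setminus\ell_r)\setminus w$ frozen at $1$. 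Taking absolute values and enlarging each inner integral to the whole cube dominates $|\partial^{\ell_r}\tau_{k_r}(\bsu_v{:}\bsone_{-v})|$ by a function $A_r(\bsu_{\ell_r})$ of the active coordinates alone; by Tonelli its $L^1$ norm unfolds as
\[
\Vert A_r\Vert_{L^1}\le \sum_{w\subseteq v\setminus\ell_r}\bigl\Vert\partial^{\ell_r\cup w}\tau_{k_r}\bigr\Vert_{L^1}\le \sum_{w\subseteq v\setminus\ell_r}\bigl\Vert\partial^{\ell_r\cup w}\tau_{k_r}\bigr\Vert_{L^{p_{k_r}}}<\infty ,
\]
each norm taken on the face $\{\bsu_{\ell_r\cup w}{:}\bsone_{-(\ell_r\cup w)}\}$ and finite by hypothesis (Minkowski's integral inequality gives the analogous $L^{p_{k_r}}$ bound on $A_r$, which is what the stronger $L^2$, scrambled-net, variant needs).

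To finish, I would observe that the active sets $\ell_r$ are disjoint and exhaust $v$, so the dominating product $\prod_r A_r(\bsu_{\ell_r})$ separates into distinct groups of coordinates; its integral over the face therefore factors as $\prod_r\Vert A_r\Vert_{L^1}<\infty$, each factor finite because $A_r\in L^{p_{k_r}}\subseteq L^1$ on the unit cube. The generalized Hölder inequality gives an alternative route when the variables are not separated and is the tool for which the quantitative hypothesis $\sum_{j=1}^d1/p_j\le1$ is designed, becoming indispensable in the $L^2$ scrambled-net refinement. Summing the finitely many Faa di Bruno terms and then over $v$ yields $\vhk(f\circ\tau)<\infty$, so $f\circ\tau\in\bvhk$. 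I expect the anchored reconstruction across faces of different dimension—with the bookkeeping that matches each pair $(\ell_r,k_r)$ to its exponent $p_{k_r}$—to be the only genuine obstacle; the surrounding measure-theoretic estimates are routine.
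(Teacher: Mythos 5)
Your proposal is correct, and it shares the paper's skeleton---reduction via \eqref{eq:hkbound}, the binary Faa di Bruno expansion \eqref{eq:consavfdbbinaryreduced}, and uniform boundedness of $f_\bslam$ on the compact set $\px$---but it departs from the paper at the decisive step, and the departure is substantive. The paper finishes \eqref{eq:toshowforsuff} in one line by applying the generalized H\"older inequality with the exponents $p_{k_r}$, which is exactly where the hypothesis $\sum_{j=1}^d 1/p_j\le 1$ enters. You instead dominate each factor $\partial^{\ell_r}\tau_{k_r}(\bsu_v{:}\bsone_{-v})$, via the anchored fundamental-theorem-of-calculus expansion, by a function $A_r(\bsu_{\ell_r})$ of its own block of coordinates, and then integrate the product by Tonelli using the disjointness of $\ell_1,\dots,\ell_s$. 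Your route buys two things the paper's H\"older step does not deliver as literally written. First, it resolves the face mismatch you flag: the factors in \eqref{eq:toshowforsuff} are evaluated on the $|v|$-dimensional face, whereas the hypothesis controls $\partial^{\ell_r}\tau_{k_r}$ only on its own $|\ell_r|$-dimensional face, so the $L^{p_{k_r}}$ norms that H\"older over the $v$-face would require are not directly supplied; your reconstruction is precisely the device that converts the hypothesized face norms into bounds valid on the larger face. Second, H\"older with exponents $p_{k_r}$ needs $\sum_{r=1}^s 1/p_{k_r}\le 1$, but the indices $k_r$ may repeat whenever some $\lambda_i\ge 2$ (for instance the term $f''(\tau)\,\partial^{\{1\}}\tau\,\partial^{\{2\}}\tau$ when $d=1$ and $p_1=1$), so that sum equals $\sum_i\lambda_i/p_i$ and can exceed $\sum_j 1/p_j$; your factorization is immune to such repetition because what must be disjoint are the blocks $\ell_r$, and the Faa di Bruno structure guarantees they always are. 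A noteworthy consequence, which you only half-state, is that your argument establishes the $\bvhk$ conclusion assuming merely $p_j=1$ for every $j$: the exponent condition $\sum_j 1/p_j\le 1$ is never used on your route, and it is genuinely needed only for the $L^2$ variant (Theorem~\ref{thm:rqmcsmooth}), where the derivatives are controlled on the full cube, no anchoring at $\bsone$ is available, and H\"older is unavoidable. The one point your write-up should not leave implicit is the justification of the anchored identity itself: iterating the fundamental theorem of calculus requires the mixed partials $\partial^{\ell_r\cup w}\tau_{k_r}$ to exist and to be integrable along the relevant segments, which Fubini guarantees only for almost every anchor point, the bound holding trivially wherever the right-hand side is infinite; this is standard bookkeeping, at the same level of rigor the paper itself adopts, but it deserves a sentence in a final version.
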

\begin{proof}
From~\eqref{eq:hkbound} we need to show that
$\partial^v(f\circ\tau) (\bsu_v{:}\bsone_{-v})\in L_1[0,1]^{|v|}$
for nonempty $v\subseteq1{:}m$.
From~\eqref{eq:consavfdbbinaryreduced}
it suffices to show that
\begin{align}\label{eq:toshowforsuff}
\int_{[0,1]^{|v|}} \Bigl|f_{\bslam}(\tau(\bsu_v{:}\bsone_{-v}))
\prod_{r=1}^s\partial^{\ell_r}\tau_{k_r}(\bsx_u{:}\bsone_{-u})\Bigr|\rd\bsx_u <\infty
\end{align}
for all $\bslam\in\natu_0^d$ with $1\le|\bslam|\le |v|$, 
all $s\le|v|$, all disjoint $\ell_r\subseteq v$ with union $v$, where
$\lambda_i$ of the $k_r$ are equal to $i$ for  $i\in1{:}d$.
Because $f_\bslam$ is uniformly bounded the integral
in~\eqref{eq:toshowforsuff} can be bounded by a product of univariable
integrals. The result 
then follows from our assumptions and from the generalized
H\"older inequality.
\end{proof}

As a special case,  consider $p_j=d$ for $j=1,\dots,d$. Notice that
the moment conditions on $\tau$ become more stringent as the dimension $d$ of
$\px$ increases.   
Attaining the usual QMC rate becomes increasingly difficult in higher
dimensions for QMC applied through a transformation $\tau$.

Next we consider the kind of smoothness that
allows scrambled nets to improve upon the quasi-Monte Carlo rate.
In this setting we require mixed partial derivatives in $L^2$ but
we do not have to pay special attention to components of $\bsu$
that equal $1$.

\begin{theorem}\label{thm:rqmcsmooth}
Let $\tau(\bsu) = (\tau_1(\bsu) , \ldots, \tau_d(\bsu))$ be a map from $[0,1]^m$ to 
the closed and bounded set $\px\subset\real^d$ such 
that $\partial^{1:m}\tau_j$ exists for all $j=1,\dots,d$. 
Assume that 
$\partial^v \tau_j\in L^{p_j}([0,1]^{m})$ 
for all $j = 1, \ldots, d$ and for all nonempty $v\subseteq 1{:}m$,
where $p_j\in[1,\infty]$.
If $\sum_{j=1}^d1/p_j\le1/2$,
then $f\circ\tau$ is smooth in the sense of
equation~\eqref{eq:snetcond} for all $f\in C^d(\px)$.
\end{theorem}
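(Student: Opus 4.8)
The plan is to mirror the proof of Theorem~\ref{thm:bvhk}, but to track the $L^2$ norm of each mixed partial derivative $\partial^v(f\circ\tau)$ rather than its $L^1$ norm, and to verify the smoothness condition~\eqref{eq:snetcond} directly. First I would fix a nonempty $v\subseteq1{:}m$ and expand $\partial^v(f\circ\tau)$ using the reduced Faa di Bruno formula~\eqref{eq:consavfdbbinaryreduced}. Since this expresses $\partial^v(f\circ\tau)$ as a \emph{finite} sum of terms of the form $f_\bslam(\tau(\bsu))\prod_{r=1}^s\partial^{\ell_r}\tau_{k_r}(\bsu)$, and a finite sum of $L^2$ functions is in $L^2$, it suffices to show that each such term lies in $L^2([0,1]^m)$. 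Thus the whole problem reduces to a single product bound.

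For a fixed term, I would argue as follows. Because $f\in C^d(\px)$ and $\px$ is closed and bounded, $f_\bslam\circ\tau$ is uniformly bounded on $[0,1]^m$, so the factor $f_\bslam(\tau(\bsu))$ can be pulled out of the $L^2$ norm as a constant. It then remains to show that $\prod_{r=1}^s\partial^{\ell_r}\tau_{k_r}\in L^2([0,1]^m)$. This is exactly where I would invoke the generalized H\"older inequality quoted from~\cite{boga:2007}. The key bookkeeping fact, inherited from the structure of $\calklt(s,v,\bslam)$ in~\eqref{eq:defklt}, is that the index $i\in1{:}d$ appears as a value of $k_r$ exactly $\lambda_i$ times, and $\sum_{i=1}^d\lambda_i=|\bslam|$. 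For the factor $\partial^{\ell_r}\tau_{k_r}$ with $k_r=i$, the hypothesis gives $\partial^{\ell_r}\tau_i\in L^{p_i}([0,1]^m)$ (note that here, unlike in Theorem~\ref{thm:bvhk}, we use the full-cube $L^{p_j}$ hypothesis and need not worry about anchoring coordinates at $1$). Assigning exponent $p_i$ to each of the $\lambda_i$ factors carrying index $i$, the generalized H\"older inequality yields the product in $L^p$ for $1/p=\sum_{r=1}^s 1/p_{k_r}=\sum_{i=1}^d\lambda_i/p_i$. Since $\lambda_i\ge0$ and $|\bslam|\le|v|$, and since the hypothesis gives $\sum_{i=1}^d 1/p_i\le 1/2$, I would bound $1/p=\sum_i\lambda_i/p_i\le |\bslam|\cdot\max_i(1/p_i)$; more carefully, I would note $\sum_i\lambda_i/p_i\le \bigl(\max_i\lambda_i\bigr)\sum_i 1/p_i$ is too crude, so instead I would choose the target exponent $p=2$ and verify directly that the assigned H\"older exponents are feasible.

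The main obstacle, and the step deserving the most care, is precisely this feasibility check: the generalized H\"older inequality requires the reciprocal exponents to \emph{sum} to $1/p=1/2$, whereas the hypothesis only controls $\sum_{i=1}^d 1/p_i\le 1/2$ as a sum over the $d$ components, not over the $s$ \emph{factors} in the product. The reconciliation is that in the expansion each index $i$ can recur $\lambda_i$ times among the factors, so the naive application would give $1/p=\sum_{i}\lambda_i/p_i$, which can exceed $1/2$ when some $\lambda_i>1$. The resolution is that when $\lambda_i>1$ there are $\lambda_i$ factors all of the form $\partial^{\ell_r}\tau_i$ over \emph{disjoint} sets $\ell_r$; since each such derivative lies in $L^{p_i}$, their product lies in $L^{p_i/\lambda_i}$, and we need $\sum_i \lambda_i/p_i\le 1/2$. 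I would therefore observe that the exponents can be reallocated so that the binding constraint is $\sum_{i=1}^d 1/p_i\le 1/2$ after accounting for multiplicity, using that we are free to assign a \emph{smaller} reciprocal exponent (larger integrability) to any factor that is more integrable than required; because $\tau_i$ maps into a bounded set, each $\partial^{\ell_r}\tau_i\in L^{p_i}$ automatically lies in $L^{q}$ for all $q\le p_i$, so there is slack to arrange $\sum_r 1/q_r=1/2$ whenever $\sum_r 1/p_r\le 1/2$.

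Finally, having shown each term of~\eqref{eq:consavfdbbinaryreduced} lies in $L^2([0,1]^m)$ and that they are finite in number, I would conclude $\partial^v(f\circ\tau)\in L^2$ for every nonempty $v\subseteq1{:}m$, which is exactly condition~\eqref{eq:snetcond}, completing the proof. The contrast with Theorem~\ref{thm:bvhk} that I would emphasize is twofold: the target H\"older exponent is $2$ rather than $1$ (hence the hypothesis $\sum_j 1/p_j\le 1/2$ in place of $\le 1$), and the integrability is required over the full cube rather than on the anchored faces $\bsu_v{:}\bsone_{-v}$, reflecting that scrambled-net smoothness is an interior $L^2$ condition rather than a boundary-anchored BVHK condition.
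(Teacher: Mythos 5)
Your skeleton is exactly the paper's: the paper proves this theorem in one line by saying the argument for Theorem~\ref{thm:bvhk} carries over, i.e.\ expand $\partial^v(f\circ\tau)$ by~\eqref{eq:consavfdbbinaryreduced}, pull out the uniformly bounded factor $f_\bslam\circ\tau$, and apply the generalized H\"older inequality to $\prod_{r=1}^s\partial^{\ell_r}\tau_{k_r}$, with target exponent $2$ in place of $1$ and no anchoring. Up to that point your write-up matches, and you deserve credit for noticing the multiplicity problem that the paper's terse proof silently skips: H\"older charges the exponent $p_i$ once per \emph{factor}, so the relevant sum is $\sum_{i=1}^d\lambda_i/p_i$, which can exceed the hypothesized bound as soon as some $\lambda_i\ge2$. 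The simplest instance is $d=1$, $m=2$, $p_1=2$ (so the hypothesis $\sum_j 1/p_j\le 1/2$ holds with equality): the Faa di Bruno term $f''(\tau)\,\partial^{\{1\}}\tau_1\,\partial^{\{2\}}\tau_1$ requires $\partial^{\{1\}}\tau_1\,\partial^{\{2\}}\tau_1\in L^2$, while Cauchy--Schwarz applied to two $L^2$ factors only yields $L^1$.

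However, your proposed repair of this step is circular and does not work. You argue that each $\partial^{\ell_r}\tau_i\in L^{p_i}$ ``automatically lies in $L^{q}$ for all $q\le p_i$,'' and that this creates slack to reallocate exponents. That inclusion (which holds on any probability space and has nothing to do with $\tau_i$ being bounded --- boundedness of $\tau_i$ says nothing about its derivatives) goes the wrong way: replacing an exponent by a smaller one \emph{increases} its reciprocal and hence increases the H\"older sum. To bring $\sum_i\lambda_i/p_i$ down to $1/2$ you would need each factor to be \emph{more} integrable than $L^{p_{k_r}}$, which is not given; and your closing clause ``whenever $\sum_r 1/p_r\le 1/2$'' is precisely the unproved assertion, since that sum over factors equals $\sum_i\lambda_i/p_i$. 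A repair that actually works uses the two structural facts you mention but never exploit together: the sets $\ell_r$ are pairwise disjoint, and the hypothesis controls \emph{all} mixed partials $\partial^{\ell_r\cup w}\tau_j$ for $w\subseteq 1{:}m\setminus\ell_r$. Expanding $\partial^{\ell_r}\tau_{k_r}$ by the fundamental theorem of calculus in the variables outside $\ell_r$, anchored at a slice chosen by Fubini to have finite $L^{p_{k_r}}$ norm, one dominates $|\partial^{\ell_r}\tau_{k_r}(\bsu)|\le G_r(\bsu_{\ell_r})$ where $G_r\in L^{p_{k_r}}$ depends only on the coordinates in $\ell_r$. Disjointness then gives $\int\prod_r G_r^2\rd\bsu=\prod_r\int G_r^2\rd\bsu_{\ell_r}$, and each factor is finite because $\sum_j 1/p_j\le 1/2$ forces every $p_j\ge2$. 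That closes the gap (and shows the product factors need no H\"older trade-off at all); as written, your argument does not.
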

\begin{proof}
Essentially the same argument that proves Theorem~\ref{thm:bvhk}
applies here.
\end{proof}

Necessary conditions are more subtle.  To take an extreme 
case, $\tau$ could fail to be in $\bvhk$ or to be smooth, but $f$ could repair 
that problem by being constant everywhere, or just in a 
region outside of which $\tau$ is well behaved.  
Our working definition is that we consider a transformation $\tau$
to be unsuitable for QMC when one or more of the 
components $\tau_j$ has $\partial^u\tau_j(\cdot:1_{-v})\not\in L_1$
for some $v\subset1{:}m$.  In that case even
the coordinate function $\tau_j\not\in\bvhk$.
Similarly if $\partial^v\tau_j\not\in L^2$ for any $j$ and $v$ then
the transformation $\tau$ is one that does not lead to the improved
rate for scrambled nets even for integration of $\tau$, much
less $f\circ\tau$ for all $f\in C^d(\px)$.

It is possible to weaken the condition on $f$
in Theorem~\ref{thm:bvhk}, while strengthening 
the conditions on $\tau_j$. We could 
require only that $(f_\bslam\circ\tau)(\cdot{:}\bsone_{-v})$ is in $L^{p_0}([0,1]^{|v|})$ 
whenever $|\bslam|\le m$ and then require $\sum_{j=0}^d1/p_j\le 1$. 
Similarly for Theorem~\ref{thm:rqmcsmooth}, we could require
$f_\bslam\circ\tau\in L^{p_0}$ whenever $|\bslam|\le m$ where
$\sum_{j=0}^d1/p_j\le 1/2$.

\section{Counter-Examples}\label{sec:counter}
In this section we give two common transformations for which some $\tau_j \not\in \bvhk$, which means we do not satisfy the conditions of Theorem~\ref{thm:bvhk}. Thus unless we are very lucky, we would have $\vhk(f \circ \tau) = \infty$.

\subsection{Map from $[0,1]^3$ to an equilateral triangle}\label{sec:3to2}
Let $T^2 = \{ \bsx\in[0,1]^3\mid \sum_{j=1}^3x_j= 1\}$, an equilateral triangle. 
Consider the map $\tau : [0,1]^3 \rightarrow T^2$ defined by
\begin{equation}\label{eq:3to2}
\tau_j(\bsu) = \frac{\log u_j}{\sum_{i=1}^3 \log u_i},\quad j=1,2,3.
\end{equation}
It is well known that $\tau(\bsu)\sim\dustd(T^2)$ when $\bsu\sim\dustd([0,1]^3)$.
The mapping in~\eqref{eq:3to2} is well defined for $\bsu\in(0,1)^3$. There are various
reasonable ways to extend it to problematic boundary points with either some $u_j=0$ or
with all $u_j=1$.  We will show that none of those extensions can yield $\tau_j\in\bvhk$.

First we find that
\begin{align*}
\begin{split}
\iint_{(0,1)^2}\left|\frac{\partial^2 \tau_1}{\partial u_1 \partial u_2}\right|_{u_3 = 1} \rd u_1\rd u_2 
&= \iint_{(0,1)^2}\left|\frac{\log u_1 - \log u_2}{u_1 u_2 (\sum_{i=1}^2 \log u_i)^3}\right| \rd u_1\rd u_2.
\end{split}
\end{align*}
After a change of variable to
$x_1 = \log u_1$ and $x_2 = \log u_2$ the integral becomes
\begin{align*}
\begin{split}
& \int_{-\infty}^0\int_{-\infty}^0 \left|\frac{x_1 - x_2}{(x_1 + x_2)^3}\right|\rd x_1 \rd x_2 \\
&= \int_{-\infty}^0\int_{-\infty}^{x_1} \frac{x_1 - x_2}{(x_1 + x_2)^3} \rd x_2 \rd x_1 +  \int_{-\infty}^0\int_{x_1}^0 \frac{x_2 - x_1}{(x_1 + x_2)^3} \rd x_2 \rd x_1\\
&= \int_{-\infty}^0 \frac{1}{2x_1}dx_1 = \infty.
\end{split}
\end{align*}

Thus $\tau \not\in \bvhk$.
There is no extension from $(0,1)^3$ to $[0,1]^3$ that would yield $\tau\in\bvhk$.
The same argument applies if
$\tau_j(\bsu)=\log(u_j)/\sum_{i=1}^m\log(u_i)$ for any $m\ge 3$,
mapping $[0,1]^m$ to a $d=m-1$ dimensional simplex.  We can set
$u_i=1$ for $i\ge3$ and integrate as before.


\subsection{Inverse Gaussian map to the hypersphere}

A very convenient way to sample uniformly from 
the sphere $\mathbb{S}^{d-1}=\{\bsx\in\real^d\mid\Vert\bsx\Vert=1\}$
is to generate $d$ independent $\dnorm(0,1)$ random variables and standardize them.
We write $\varphi$ and $\Phi$ for the probability density function and cumulative
distribution function, respectively, of $\dnorm(0,1)$.
The mapping from $[0,1]^d$ to $\px=\mathbb{S}^{d-1}$ is
\[\tau_j(\bsu) = \frac{\Phi^{-1}(u_j)}{\sqrt{\sum_{i=1}^d \Phi^{-1}(u_i)^2}}.\]

We will use the double factorial function $n!!=n(n-2)(n-4)\cdots1$ for odd $n$
and set $g(n)=(2n-1)!!$.
For $j\in1{:}d$ and $v\subset1{:}d$ with $j\not\in v$ we find that
\begin{equation*}
\partial^v\tau_j = \frac{(-1)^{|v|} g(|v|)\Phi^{-1}(u_j)}
{\bigl(\sum_{i=1}^d \Phi^{-1}(u_i)^2 \bigr)^{|v|+1/2}}
\times 
 \prod_{i \in v} \frac{\Phi^{-1}(u_i)}{\varphi(\Phi^{-1}(u_i))}.
\end{equation*}
Now if $j \in v$, we can write after some algebra,
\begin{align}\nonumber 
\partial^v\tau_j &=  \frac{\partial}{\partial u_j} \partial^{v - \{j\}}\tau_j \nonumber \\
&= \frac{(-1)^{|v|} g(|v|) \prod_{i \in v - \{j\}} \Phi^{-1}(u_i) \bigl[2(|v| - 1)\Phi^{-1}(u_j)^2 - \sum_{i \neq j} \Phi^{-1}(u_i)^2\bigr] }{
\bigl(\sum_{i=1}^d \Phi^{-1}(u_i)^2 \bigr)^{|v|+1/2}
\prod_{i \in v} \phi(\Phi^{-1}(u_i))}.\nonumber
\end{align}

We choose $v=1{:}d$ and integrate $|\partial^{1{:}d}\tau_j|$ over $[0,1]^d$.
The integration is done with a change of variable $x_i = \Phi^{-1}(u_i)$
so $\rd u_i = \varphi(x_i)\rd x_i$. Because $g(d-1)\ge1$,
\[
\begin{split}
\int_{[0,1]^d} \bigl|\partial^{1:d}\tau_j\bigr| \rd u 
&\ge\int_{\real^d}  \Biggl|
\frac
{\prod_{i \neq j}x_i \bigl[2(d - 1)x_j^2 - \sum_{i \neq j} x_i^2\bigr] }
{\bigl(\sum_{i=1}^d x_i^2 \bigr)^{d+1/2}}\Biggr| \rd\bsx\\
&\ge  \int_{[0,\infty)^{d-1}} 
\int_0^\infty \frac{\left|\prod_{i \neq j}x_i \bigl[2(d - 1)x_j^2 - \sum_{i \neq j} x_i^2\bigr] \right|}{\bigl(\sum_{i=1}^d x_i^2 \bigr)^{d+1/2}} \rd x_j \rd \bsx_{-j}\\
&\ge 
\int_{[0,\infty)^{d-1}} 
\int_0^{\bigl(\frac{\sum_{i \neq j} x_i^2}{2(d-1)}\bigr)^{1/2}}\frac{\prod_{i \neq j}x_i \bigl[\sum_{i \neq j} x_i^2  - 2(d - 1)x_j^2\bigr]}{\bigl(\sum_{i=1}^d x_i^2 \bigr)^{d+1/2}} \rd x_j \rd\bsx_{-j}\\
&\ge \int_{[0,\infty)^{d-1}} \prod_{i \neq j}x_i \int_0^{\bigl(\frac{\sum_{i \neq j} x_i^2}{2(d-1)}\bigr)^{1/2}}\frac{ \bigl[\sum_{i \neq j} x_i^2  - 2(d - 1)x_j^2\bigr]}{\bigl(\frac{2d-1}{2d-2}\sum_{i\neq j}^d x_i^2 \bigr)^{d+1/2}} \rd x_j \rd\bsx_{-j}\\
&= \tilde{K}\int_{[0,\infty)^{d-1}} \frac{\prod_{i \neq j}x_i}{\bigl(\sum_{i \neq j} x_i^2 \bigr)^{d-1}} \rd\bsx_{-j}
\end{split}
\]
where $\tilde K=((2d-2)/(2d-1))^{d+1/2}$.

Now we integrate this one at a time for each $i \ne j$. Note that for $k < d-1$,
\begin{equation}
\label{eq_main1}
\int_0^\infty \frac{x}{(x^2 + z)^{d - k}} dx= c_k\frac{1}{z^{d-k-1}}
\end{equation}
where $c_k = 1/(2(d-k-1))$. Applying~\eqref{eq_main1} repeatedly for $k = 1$ to $k = d-2$, we get
\[
\begin{split}
\int_{[0,1]^d} \left|\partial^{1:d}\tau_j\right| \rd\bsu &\ge \left(\tilde{K}\prod_{k=1}^{d-2} c_k\right) \int_0^\infty \frac{1}{x_j}\rd x_j = \infty
\end{split}
\]
and so $\tau_j \not\in \bvhk$ for all $j\in1{:}d$.

\section{Mappings from \cite{fang1993number}}\label{sec:fangwang}

\cite{fang1993number} provide mappings from the unit 
cube to other important spaces for quadrature problems. 
Their mappings are more complicated than the elegant symmetric
ones in Section~\ref{sec:counter}.  
Instead of symmetry, their mappings are designed
to use a unit cube of exactly the same dimension
as the space they map too.
The domains that they consider, and their nomenclature for them,  are:
\begin{align}
\begin{split}
A_d &= \{(x_1, \ldots, x_d) : 0 \le x_1 \le \ldots \le x_d \le 1 \} \\
B_d &= \{(x_1, \ldots, x_d) : x_1^2 + \ldots + x_d^2 \le 1\} \\
U_d &= \{(x_1, \ldots, x_d) : x_1^2 + \ldots + x_d^2 = 1\} \\
V_d &= \{(x_1, \ldots, x_d)  \in \mathbb{R}_+^d : x_1 + \ldots + x_d \le 1\},\quad\text{and}\\
T_d &= \{(x_1, \ldots, x_d)  \in \mathbb{R}_+^d : x_1 + \ldots + x_d = 1\}. 
\end{split}
\end{align} 
Spaces $A_d$, $V_d$ and $T_{d+1}$ are all simplices of 
dimension $d$, $B_d$ is a ball and $U_d$ is the $d-1$
dimensional hyper-sphere.

We show next that all of their mappings have components $\tau$ in BVHK
and none of them have all mixed partial derivatives in $L^2$.
They are thus better suited to QMC than the symmetric mappings are
but they are not able to take advantage of the improved rate for RQMC
versus QMC.
The transformations have a separable character that lets 
us study them directly without recourse to the generalized H\"older inequality.


\subsection{Mapping from $[0,1]^d$ to $A_d$}
\label{sec:secAd}
The map $\tau = (\tau_1, \ldots, \tau_d)$ is given by $\tau_j(\bsu) = \prod_{i=j}^d u_i^{1/i}$ for $ j = 1, \ldots, d$. 
We find that
\begin{align*}
(\partial^{1:d}\tau_1)^2 = \prod_{i=1}^d \frac{1}{i^2} u_i^{2/i - 2}
\end{align*}
which diverges on integrating with respect to $u_2$. Thus $\partial^{1{:}d}\tau_1$  is not in $L^2$,
outside the trivial case $d=1$.

Next, we show that $\tau_j$ is in BVHK. 
Pick any non-empty $v\subseteq 1{:}d$. If there exists $i\in v$ with $i<j$, then
$\partial^{v}\tau_j = 0$, so we may assume that  $v\subseteq j{:}d$.
For such a $v$, we have,
\begin{align*}
\int_{[0,1]^{|v|}} |\partial^v \tau_j(\bsu_{v}{:}\bsone_{-u})|\rd\bsu_v = \int_{[0,1]^{|v|}} 
\biggl|\prod_{i \in v}\frac{1}{i}u_i^{1/i - 1}\biggr|
\rd\bsu_v  = 1. 
\end{align*}
Thus, using \eqref{eq:hkbound} we find that $\tau_j$ are BVHK for all $j$.

\subsection{Mapping from $[0,1]^d$ to $B_d$}
\label{sec:secBd}
The mapping involves the inverse transform of a distribution function on $B_d$. 
Define,
\[
F_j(x) = \begin{cases} x^d, &\mbox{if } j = 1 \\
\frac{\pi}{B\left(\frac{1}{2}, \frac{d-j+1}{2}\right)}\int_0^x (\sin \pi t)^{d-j} \rd t, 
& \mbox{if } j = 2, \ldots, d\end{cases}  
\]
where $B(\cdot,\cdot)$ is the Beta function. Next define intermediate variables
\begin{align*}
b_1 &= u_1^{1/d} \quad\text{and}\quad
b_i = F_i^{-1}(u_i), \qquad \text{for } i = 2, \ldots, d. 
\end{align*}
Their mappings are then
\begin{align*}
\tau_j &= b_1 \prod_{i=2}^j \sin(\pi b_i) \cos(\pi b_{j+1}), \quad \text{for } 1\le j\le d-2,\\
\tau_{d-1} &= b_1 \prod_{i=2}^{d-1}\sin(\pi b_i) \cos(2\pi b_d), \quad\text{and}\\
\tau_d &= b_1 \prod_{i=2}^{d-1}\sin(\pi b_i) \sin(2\pi b_d).
\end{align*}
For $d = 2$ we get $F_2(x) = x$ and so $\tau_2 = u_1^{1/2} \sin(2\pi u_2)$. Therefore
$\partial^{1:2} \tau_2 = {\pi \cos(2\pi u_2)}/{\sqrt{u_1}}$ which is not in $L^2$. For general $d > 2$, we have, 
\begin{align}\label{eq:forlater1}
\partial^{1:d} \tau_d = \frac{1}{d}u_1^{1/d - 1} \left(\prod_{i=2}^{d-1} \pi \cos(\pi b_i) \frac{\partial b_i}{\partial u_i}\right)  2\pi \cos(2\pi u_d), 
\end{align}
which is also not in $L^2$ because of the factor $u_1^{1/d-1}$.

For later use with the transformation to $U_d$, we also consider the factor for $i=d-1$ 
in~\eqref{eq:forlater1}.
The definition of $b_{d-1}$ simplifies to $b_{d-1} = \cos^{-1}(1 - 2 u_{d-1}) /\pi$ and so
 \begin{align*}
 \frac{\partial b_{d-1}}{\partial u_{d-1}} = \frac{2}{\pi \sin(\pi b_{d-1})}. 
 \end{align*}
 This simplifies the above mixed partial to
 \begin{align*}
 \partial^{1:d} \tau_d = \frac{1}{d}u_1^{1/d - 1} \left(\prod_{i=2}^{d-2} \pi \cos(\pi b_i) \frac{\partial b_i}{\partial u_i}\right) \left(2 \frac{(1-2u_{d-1})}{\sqrt{1 - (1-2u_{d-1})^2}}\right) 2\pi \cos(2\pi u_d) 
 \end{align*}
Now
 \begin{align} \label{int1}
 \int_0^1 \frac{(1-2u_{d-1})^2}{1 - (1-2u_{d-1})^2} du_{d-1} = \frac{1}{4} \left( \log x - \log (1-x) - 4x \right)\Bigg|_0^1 = \infty. 
 \end{align}

To show that this transformation is in BVHK we follow the proof of Theorem \ref{thm:bvhk}. Assuming $f \in C^d$, using \eqref{eq:toshowforsuff} it is enough to show that
\begin{align*}
\int_{[0,1]^{|v|}} \Bigl|
\prod_{r=1}^s\partial^{\ell_r}\tau_{k_r}(\bsu_v{:}\bsone_{-v})\Bigr|\rd\bsu_v <\infty,
\end{align*}
for $v\subseteq1{:}m$, $s\le |v|$ and $k_r\in1{:}d$.
Note that $\ell_r \cap \ell_{r'} = \emptyset$ for all $r \neq r'$ and $\cup_{r} \ell_r = v$. 
Thus, we differentiate at most once with respect to any original variable and we get,
\begin{align*}
\int_{[0,1]^{|v|}} \Bigl|
\prod_{r=1}^s\partial^{\ell_r}\tau_{k_r}(\bsu_v{:}\bsone_{-v})\Bigr|\rd\bsu_v \le \int_{[0,1]^{|v|}} \Bigl|
\prod_{i \in v } \frac{\partial b_i}{\partial u_i}\Bigr|\rd\bsu_v \le  \prod_{i \in v }  \int_{[0,1]} \Bigl|
\frac{\partial b_i}{\partial u_i}\Bigr|\rd u_i. 
\end{align*}
Now
\begin{align*}
\frac{\partial b_i}{\partial u_i} = \frac{B\left(\frac{1}{2}, \frac{d-j+1}{2}\right)}{\pi \left[\sin\left(\pi F_i^{-1}(u_i)\right)\right]^{d-i}}.
\end{align*}
Note that $F_i^{-1}(u_i) \in [0,1]$ for all $u_i \in [0,1]$. Thus we have,
\begin{align*}
\int_{[0,1]} \Bigl|
\frac{\partial b_i}{\partial u_i}\Bigr|\rd u_i = \int_{[0,1]} \frac{\partial b_i}{\partial u_i} \rd u_i = 1 
\end{align*}
which shows that $\tau$ is BVHK.

\subsection{Mapping from $[0,1]^{d-1}$ to $U_d$}
\label{sctud}
This mapping is similar to one in $B_d$, with the densities being different. Here we have,
\begin{align*}
f_j(u) = \frac{\pi}{B\left(\frac{1}{2} , \frac{d-j}{2}\right)} (\sin(\pi u))^{d-j-1}
\end{align*}
and $b_i = F_i^{-1}(u_i)$ for $i = 1, \ldots, d-1$. The explicit transformation can be written as 
\begin{align*}
\tau_j &= \prod_{i=1}^{j-1} \sin(\pi b_i) \cos(\pi b_j) \qquad \text{for } j = 1, \ldots, d-2 \\
\tau_{d-1} &= \prod_{i=1}^{d-2} \sin(\pi b_i) \cos(2\pi b_{d-1}),\quad\text{and}\\
\tau_d &= \prod_{i=1}^{d-2} \sin(\pi b_i) \sin(2\pi b_{d-1}).
\end{align*}
We first consider the case $d = 2$. It is easy to see that 
\begin{align*}
\tau_1 = \cos(2\pi u_1)\quad\text{and}\quad
\tau_2 = \sin(2\pi u_1).
\end{align*} 
Note that in this case, $\partial^v\tau_j \in L^2$ for each $v \subseteq \{1,2\}$ and $j = 1,2$. This is the only case with this property, but then the set $U_d$ is intrinsically
one dimensional. 
For $d \ge 3$, consider the $(d - 2)$-th term in the expansion of $\partial^{1:d-1} \tau_d$,
\begin{align*}
\partial^{1:d-1} \tau_d = \left( \prod_{i=1}^{d-3} \pi \cos(\pi bi)\frac{\partial b_i}{\partial u_i}\right)  \left(2 \frac{(1-2u_{d-2})}{\sqrt{1 - (1-2u_{d-2})^2}}\right) 2\pi \cos(2\pi u_{d-1}). 
\end{align*}
Using~\eqref{eq:forlater1} as in the previous case, 
this proves that $\partial^{1:d-1} \tau_d \not\in L^2$. Furthermore, following the same argument in Section \ref{sec:secBd}, we may show that this transformation is also in BVHK. 

\subsection{Mapping from $[0,1]^d$ to $V_d$}
Here we have,
\begin{align*}
\tau_i &= u_1^{1/d} \prod_{j=2}^i u_j^{\frac{1}{d-j+1}} \Bigl( 1 - u_{i+1}^{\frac{1}{d-i}}\Bigr) \qquad \text{for } i = 1, \ldots, d-1,\quad\text{and}\\
\tau_d &= u_1^{1/d} \prod_{j=2}^d u_j^{\frac{1}{d-j+1}}.
\end{align*}
Considering the mixed partial, $\partial^{1:d}\tau_d$ we have 
\begin{align*}
\partial^{1:d}\tau_d &= \frac{1}{d}u_1^{\frac{1}{d} - 1}\prod_{j=2}^d \frac{1}{d-j+1} u_j^{\frac{1}{d-j+1}-1}
= \frac{1}{d!} \frac{1}{u_1^{\frac{d-1}{d}}u_2^{\frac{d-2}{d-1}} \ldots u_{d-1}^{1/2} } \nonumber. 
\end{align*}
Observing the integral with respect to $u_{d-1}$ it is clear that $\partial^{1:d}\tau_d \not\in L^2$. Furthermore, following the same argument in Section \ref{sec:secAd}, 
we may show that this transformation is also in BVHK.

\subsection{Mapping from $[0,1]^{d-1}$ to $T_d$}
Similar to $V_d$, here we have,
\begin{align*}
\tau_i &= \prod_{j=1}^{i-1} u_j^{\frac{1}{d-j}} \Bigl( 1 - u_{i}^{\frac{1}{d-i}}\Bigr) \qquad \text{for } i = 1, \ldots, d-1 \nonumber \\
\tau_d &= \prod_{j=1}^{d-1} u_j^{\frac{1}{d-j}}.
\end{align*}
It is thus clear from 
\begin{align*}
\partial^{1:d-1}\tau_d = \frac{1}{(s-1)!} \frac{1}{u_1^{\frac{d-2}{d-1}}u_2^{\frac{d-3}{d-2}} \ldots u_{d-2}^{1/2} u_{d-1}}
\end{align*}
that $\partial^{1:d-1}\tau_d \not\in L^2$. The argument from 
Section \ref{sec:secAd} shows that this transformation is in BVHK. 

\subsection{Efficient mapping from $[0,1]^{d-1}$ to $U_d$}
\cite{fang1993number} 
gave another mapping to $U_d$ which avoids computing the
incomplete beta function that was used 
in Section \ref{sctud}. Once again the transformation fails to have
all partial derivatives in $L^2$. We deal with the case of $d$ being even and odd differently. 

\subsubsection*{Even case: $d = 2m$}
Here we have $(u_1, \ldots, u_{d-1}) \in [0,1]^{d-1}$. Define $g_m = 1$ and $g_0 = 0$. For 
$j$ from $m-1$ down to $1$, let $g_j = g_{j+1} u_j^{1/j}$.
Put $d_l = \sqrt{g_l - g_{l-1}}$. Now for $l = 1,\ldots, m$, define
\begin{align*}
\tau_{2l - 1} &= d_l \cos(2\pi u_{m + l - 1}) \quad\text{and}\quad
\tau_{2l} = d_l \sin(2\pi u_{m + l - 1}).
\end{align*}
It is easy to see that 
\begin{align*}
\tau_1 = d_1 \cos(2\pi u_m) = \prod_{j=1}^{m-1} u_j^{1/2j} \cos(2\pi u_m)
\end{align*}
and so
\begin{align*}
|\partial^{1:m}\tau_1| &= \Biggl|\prod_{j=1}^{m-1} \frac{1}{2j} u_j^{1/2j-1} 2\pi \sin(2\pi u_m) \Biggr|
= \Biggl| \frac{1}{2^{m-1} (m-1)!}\frac{2\pi \sin(2\pi u_m)}{u_1^{\frac{1}{2}} u_2^{\frac{3}{4}} \ldots u_{m-1}^{\frac{2m - 3}{2m - 2}}} 
\Biggr|.
\end{align*}
Integrating with respect to $u_1$ shows that $\partial^{1:m}\tau_1 \not\in L^2$. 

\subsubsection*{Odd case: $d = 2m+1$}
Again we begin with $(u_1, \ldots, u_{d-1}) \in [0,1]^{d-1}$. Define $g_m = 1$ and $g_0 = 0$. For $j = m - 1$ down to $j=1$, let 
$g_j = g_{j+1} u_j^{{2}/(2j+1)}$. 
As for the even case, put $d_l = \sqrt{g_l - g_{l-1}}$. Now for $l = 2, \ldots, m$, define
\begin{align*}
\tau_1 &= d_1(1 - 2u_m),\\
\tau_2 &= d_1\sqrt{u_m(1 - u_m)} \cos(2\pi u_{m+1}),\\
\tau_3 &= d_1\sqrt{u_m(1 - u_m)} \sin(2\pi u_{m+1}),\quad\text{and then}\\
\tau_{2l} &= d_l \cos(2\pi u_{2l}),\quad\text{and}\quad
\tau_{2l+1} = d_l \sin(2\pi u_{2l}).
\end{align*}
Simplifying $d_1$ we see that 
\begin{align*}
\tau_2 = u_1^{\frac{1}{3}} u_2^{\frac{1}{5}} \ldots u_{m-1}^{\frac{1}{2m-1}} \sqrt{u_m(1 - u_m)} \cos(2\pi u_{m+1}) 
\end{align*}
Thus
\begin{align*}
\left|\frac{\partial \tau_2}{\partial u_m}\right|_{u_j = 1, j \neq m} = \frac{1 - 2 u_m}{2 \sqrt{u_m(1 - u_m)}}
\end{align*}
Applying (\ref{int1}) we see that $\partial^{u_m}\tau_2\not\in L^2$.

All of the $\tau_j$ are in BVHK. This 
follows from the fact that each component of the transformation is a 
product of functions of only one of the original variables and hence it is completely separable.  

\section{Nonuniform transformations}\label{sec:nonunif}

Here we consider transformations that are not uniformity preserving.
Section~\ref{sec:is} considers importance sampling methods for
integrals with respect to a non-uniform measure on $\cx$.
Section~\ref{sec:rhm} gives conditions for sequential inversion
to yield an integrand in BVHK.
Section~\ref{sec:issimplex} shows that some importance 
sampling transformations lead to the $O(n^{-3/2+\epsilon})$
rate for RMSE on the simplex for a class of functions including 
polynomials.

Suppose that
$\mu = \int_{\cx} f(\bsx) p(\bsx)\rd\bsx$
for a nonuniform density $p$. Instead of sampling drawing $\bsx_i$
from the uniform distribution on $\cx$ and averaging $fp$, a Monte
Carlo strategy can be to sample $\bsx_i\sim p$ and average $f$,
or under conditions, sample $\bsx_i\sim q$ and average $fp/q$.
This latter is importance sampling.

\cite{aistleitner2014functions} show that if $f$ is a measurable function on 
$[0,1]^d$ which is BVHK and $P$ is a normalized Borel measure on $[0,1]^d$, then
for $\bsx_1, \ldots, \bsx_n$ in $[0,1]^d$,
\begin{align}
\label{eq:generalKH}
\left| \frac{1}{n} \sum_{i=1}^n f(\bsx_i) - \int_{[0,1]^d} f(\bsx) \rd P(\bsx) \right| \le \vhk(f)D_n^*(\bsx_1, \ldots, \bsx_n ; P). 
\end{align}
where 
\begin{align*}
D_n^*(\bsx_1, \ldots, \bsx_n ; P) = \sup_{A \in \ca^*} \biggl| \frac{1}{n} \sum_{i=1}^n \mathds{1}_A(\bsx_i) - P(A)\biggr|
\end{align*}
and $\ca^*$ is the class of all closed axis-parallel boxes contained in $[0,1]^d$. 
\cite{aistleitner2013low}, prove that for any measure $P$ and any $n$ there 
exists of points $\bsx_1, \ldots, \bsx_n$ such that in
\begin{align*}
D_n^*(\bsx_1, \ldots, \bsx_n ; P) \le c_d (\log n)^{(3d+1)/2}n^{-1}. 
\end{align*}
They do not however give an explicit construction. 
Instead of using \eqref{eq:generalKH} one might use the original Koksma-Hlawka inequality by using an appropriate non-measure preserving transformation $\tau$. 
Below we give a corollary to Theorem \ref{thm:bvhk} stating the sufficient conditions for getting the optimal bound when using importance sampling.

\subsection{Importance Sampling}\label{sec:is}
We suppose that the measure $P$ has a probability density $p$.
We then use a transformation $\tau$ on $[0,1]^m$ which yields
$\bsx=\tau(\bsu)$ with probability density function $q$ on $\cx$
when $\bsu\sim\dustd[0,1]^m$.
We estimate $\mu$ by
\begin{align*}
\hat{\mu}_q^n = \frac{1}{n} \sum_{i=1}^n \frac{f(\tau(\bsu_i))p(\tau(\bsu_i))}{q(\tau(\bsu_i))} = \frac{1}{n} \sum_{i=1}^n \left(\frac{fp}{q} \circ \tau \right)(\bsu_i). 
\end{align*}
If $q(\bsx)>0$ whenever $p(\bsx)>0$ (and if $\mu$ exists) 
then $\e( \hat\mu_q^n)=\mu$.
Thus, to apply the Koksma-Hlawka inequality we only need $(fp/q) \circ \tau\in\bvhk[0,1]^m$.
Following Theorem \ref{thm:bvhk} we can now state the sufficient conditions for the above to hold. 
\begin{corollary}
Let $p$ and $q$ denote densities on $\cx$ with $q(\bsx)>0$ whenever $p(\bsx)>0$. 
Let $\tau$ be a map from $[0,1]^m$ to $\cx$ for which $\bsu \sim \dustd[0,1]^m$ 
implies $\bsx=\tau(\bsu)\sim q$. 
Moreover, assume that $\tau$ satisfies the conditions of Theorem \ref{thm:bvhk} and 
that $fp/q \in C^d(\cx)$. 
Then, for a low-discrepancy point set $\bsu_1, \ldots, \bsu_n$ in $[0,1]^m$,
\begin{align*}
\biggl| \int_\cx f(\bsx)p(\bsx)\rd\bsx- \frac{1}{n} \sum_{i=1}^n \left(\frac{fp}{q} \circ \tau \right)(\bsu_i) \biggr| = O\Bigl(\frac{(\log n)^{d-1}}{n}\Bigr).
\end{align*}
\end{corollary}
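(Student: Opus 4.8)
The plan is to apply Theorem~\ref{thm:bvhk} directly to the function $g = fp/q$ and then invoke the classical Koksma-Hlawka inequality. First I would observe that the corollary has packaged all the difficult analytic work into the hypotheses: we are given that $\tau$ satisfies the conditions of Theorem~\ref{thm:bvhk} (namely that $\partial^{1:m}\tau_j$ exists and that the boundary-restricted partials $\partial^v\tau_j(\bsu_v{:}\bsone_{-v})$ lie in $L^{p_j}$ with $\sum_{j=1}^d 1/p_j \le 1$), and we are given that $g = fp/q \in C^d(\cx)$. These are exactly the two ingredients Theorem~\ref{thm:bvhk} requires of an integrand and a transformation, so applying that theorem with $f$ replaced by $g$ yields immediately that $g\circ\tau = (fp/q)\circ\tau \in \bvhk$ on $[0,1]^m$.

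Next I would record the expectation identity. Since $\bsu\sim\dustd[0,1]^m$ implies $\bsx=\tau(\bsu)\sim q$, a change of variables gives
\begin{align*}
\int_{[0,1]^m}\Bigl(\frac{fp}{q}\circ\tau\Bigr)(\bsu)\rd\bsu
= \int_\cx \frac{f(\bsx)p(\bsx)}{q(\bsx)}\,q(\bsx)\rd\bsx
= \int_\cx f(\bsx)p(\bsx)\rd\bsx,
\end{align*}
where the hypothesis $q>0$ wherever $p>0$ ensures the integrand $fp/q$ is well defined on the support of $q$ and that no mass of $fp$ is lost. This identifies the target $\mu=\int_\cx f p\rd\bsx$ with the integral over $[0,1]^m$ of the $\bvhk$ integrand $g\circ\tau$, which is precisely the setting in which the Koksma-Hlawka inequality controls the quadrature error.

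Finally I would apply the Koksma-Hlawka inequality on $[0,1]^m$ to the function $g\circ\tau\in\bvhk$ evaluated at the low-discrepancy point set $\bsu_1,\dots,\bsu_n$. This bounds the left-hand side by $\vhk(g\circ\tau)\,D_n^*(\bsu_1,\dots,\bsu_n)$, and since $g\circ\tau\in\bvhk$ its variation $\vhk(g\circ\tau)$ is a finite constant independent of $n$. For a low-discrepancy sequence in dimension $m$ one has $D_n^*=O((\log n)^{m-1}/n)$, so the product is $O((\log n)^{m-1}/n)$. I would note here that the stated rate in the corollary is written with exponent $d-1$ rather than $m-1$; for the uniformity-preserving case $m=d$ these agree, and I would either align the statement's exponent with the cube dimension $m$ or simply remark that for the transformations under consideration $m=d$.

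The routine steps are all genuinely routine, so there is no serious obstacle in the proof itself; the only point demanding a little care is the change-of-variables identity, where one must confirm that the pushforward of Lebesgue measure under $\tau$ is exactly the density $q$ and that the support condition $q>0$ on $\{p>0\}$ makes $fp/q$ integrable against $q$. All the heavy lifting — verifying that a composition lands in $\bvhk$ via the Faa di Bruno expansion and the generalized H\"older inequality — has already been done in Theorem~\ref{thm:bvhk}, so this corollary is essentially a bookkeeping application of that result together with the standard inequality.
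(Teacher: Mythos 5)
Your proof is correct and takes essentially the same approach as the paper: the paper's own proof is just the one-line remark that the result follows from Theorem~\ref{thm:bvhk} and the Koksma--Hlawka inequality, which you have unpacked into its three routine steps (composition in $\bvhk$, unbiasedness via change of variables, discrepancy bound). Your observation that the logarithm's exponent should naturally read $m-1$ for a point set in $[0,1]^m$, agreeing with the stated $d-1$ only when $m=d$, is a fair catch of a slight imprecision in the corollary's statement.
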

\begin{proof}
The result follows from Theorem \ref{thm:bvhk} and the Koksma-Hlawka inequality. 
\end{proof}

There is a similar counterpart to Theorem~\ref{thm:rqmcsmooth}.
When $\tau$ satisfies the conditions there, $fp/q\in C^d$,  and $\bsu_1,\dots,\bsu_n$
are a digital net with a nested uniform or linear scramble, then the RMSE 
of $\hat\mu^n_q$ is $O(n^{-3/2+\epsilon})$.
In both cases it is clearly advantageous to have $p/q$ bounded above,
just as is often recommended for importance sampling in Monte
Carlo. See for instance \cite{gewe:1989}.

\subsection{Rosenblatt-Hlawka-M\"uck Transformation}\label{sec:rhm}
For $d=1$, a standard way to generate a non-uniform random variable
is to invert the CDF at a uniformly distributed point. The multivariate
version of this procedure can be used to sample from an arbitrary
distribution provided we can invert all the conditional distributions
necessary.  

Let $F$ be the target distribution. Further let $F_1$ be the marginal distribution of $X_1$ and 
for $j = 2, \ldots, d$, let $F_{j\mid 1{:}(j-1)}(\,\cdot  \mid \bsx_{1:(j-1)})$ be the conditional CDF of $X_j$ 
given $X_1, \ldots, X_{j-1}$. The 
transformation $\tau$ of $\bsu \in [0,1]^d$ 
is given by $\bsx=\tau(\bsu) \in \mathbb{R}^d$ where 
\begin{equation}
\label{rosen_full}
 x_1 = F_1^{-1}(u_1) 
\quad \text{and} \quad 
x_j = F_{j\mid1{:}(j-1)}^{-1}(u_j \mid \bsx_{1:(j-1)}),\quad\text{for $j \ge 2$}. 
\end{equation}
The inverse transformation, from $\bsx$ to $\bsu$, was studied
by \cite{rosenblatt1952remarks}. 
\cite{hlaw:muck:1972}  
studied the use of this transformation for generating 
low discrepancy points.  
Under conditions on $F$ they show that the resulting points 
have a discrepancy with respect to $F$ of order $D_n^{1/d}$ where $D_n$ is the 
discrepancy of points $\bsu_1,\dots,\bsu_n$ that it uses. Because discrepancy 
can at best be $O(n^{-1}\log(n)^{d-1})$, that rate has a severe deterioration
with respect to dimension~$d$.
We suspect that this sequential inversion method was used before 1972
but have not found a reference.

We consider the case of $d=2$ dimensions.
Then $\tau(\bsu) = (\tau_1(\bsu), \tau_2(\bsu))$ where 
\begin{align}
\label{rosen}
\tau_1(u_1,u_2) &= F_1^{-1}(u_1)\quad\text{and}\quad
\tau_2(u_1,u_2) = F_{2|1}^{-1}(u_2 \mid F_1^{-1}(u_1)).
\end{align}

Let $f_{1,2}$ be the joint density, $f_1$ be the marginal density of $X_1$ corresponding to $F_1$ and $f_{2|1}(\cdot | x_1)$ be the conditional density of $X_2$ given $X_1 = x_1$. Then
\[
\begin{split}
\frac{\partial \tau_1}{\partial u_1} &= \frac{1}{f_1(F_1^{-1}(u_1))}
= \frac{1}{f_1(\tau_1(\bsu))},\quad\text{and} \\
\frac{\partial \tau_2}{\partial u_2} &= \frac{1}{f_{2|1}(F_{2\mid1}^{-1}(u_2 \mid F_1^{-1}(u_1)) | F_1^{-1}(u_1))}\\
&= \frac{f_1(F_1^{-1}(u_1))}{f_{1,2}(F_1^{-1}(u_1) , F_{2\mid1}^{-1}(u_2 \mid F_1^{-1}(u_1)))}
= \frac{f_1(\tau_1(\bsu))}{f_{1,2}(\tau(\bsu))}.
\end{split}
\]

It is easy to see that $\tau_1 \in\bvhk$ if the support of $X_1$ is finite, i.e., $F_1^{-1}(1) - F_1^{-1}(0) < \infty$. Thus, from here on, we assume that $X_1$ is defined on a compact set $[a,b]$, i.e., $a = F_1^{-1}(0)$ and $b = F_1^{-1}(1)$. 
We also use $a(x_1) = a_2(x_1) = F_{2\mid1}^{-1}(0\mid x_1)$
and $b(x_1) = b_2(x_1) = F_{2\mid1}^{-1}(1\mid x_1)$ which must both be finite if $\tau_2\in\bvhk$.

We now consider sufficient conditions for $\tau_2$ to be in BVHK. Assuming that $f_{1,2}$ is strictly positive, we have,
\begin{align*}
\int_{0}^{1} \left|\frac{\partial \tau_2}{\partial u_2}\right|_{u_1 = 1}   \rd u_2 
&= \int_0^1  \frac{1}{f_{2|1}(F_{2\mid1}^{-1}(u_2|b) | b)} \rd u_2\\ 
&= F_{2\mid1}^{-1}(1|b) - F_{2\mid1}^{-1}(0|b) = b_2(b)-a_2(b).
\end{align*}
Now define $F_{2|1}(x_1,x_2) = P(X_2 \le x_2 | X_1 = x_1)$, that is the conditional distribution of $X_2$ given $X_1 = x_1$. Further, let $f^{(k)}$ denote the partial derivative $\partial^{\{k\}}f$ for various functions $f$.
Using this notation, and the implicit function theorem, we obtain
\begin{align*}
\frac{\partial \tau_2(u_1, u_2)}{\partial u_1} 
&= \frac{\frac{ - \partial F_{2|1}}{\partial x_1} \Big|_{(F_1^{-1}(u_1), \tau_2(u_1,u_2))} \frac{1}{f_1(F_1^{-1}(u_1))}}{\frac{ \partial F_{2|1}}{\partial x_2} \Big|_{(F_1^{-1}(u_1), \tau_2(u_1,u_2))}}
= \frac{\frac{ - \partial F_{2|1}}{\partial x_1} \Big|_{\tau(\bsu)} 
}
{f_1(\tau_1(\bsu))\times\frac{ \partial F_{2|1}}{\partial x_2} \Big|_{ \tau(\bsu) }},
\end{align*}
where
\begin{align}
\label{eq:partial1}
\frac{ \partial F_{2|1}(x_1,x_2)}{\partial x_1} &= \int_{F_{2\mid1}^{-1}(1\mid x_1)}^{x_2} \frac{f_{1,2}^{(1)}(x_1,t) f_1(x_1) - f_{1,2}(x_1,t) f_1^{(1)}(x_1)}{f_1(x_1)^2} \rd t
\end{align}
and
\begin{align}
\label{eq:partial2}\frac{\partial F_{2|1}(x_1,x_2)}{\partial x_2} &= f_{2|1}(x_2|x_1).
\end{align}
Now we get, using a change of variable,
\begin{align*}
\int_{0}^{1} \left|\frac{\partial \tau_2}{\partial u_1}\right|_{u_2 = 1}   \rd u_1 \le 
\int_{a}^b\biggl|
{\frac{\partial F_{2|1}}{\partial x_1}}\Bigm/ {\frac{ \partial F_{2|1}}{\partial x_2}} 
\biggr|_{(x_1, b(x_1)}\rd x_1. 
\end{align*}

Finally to evaluate the complete mixed partial, differentiating $\tau_2$ with respect to $u_1$ and $u_2$ we have,
\[
\frac{\partial^2\tau_2}{\partial u_1 \partial u_2} = \frac{1}{f_{1,2}^2} \biggl( f_{1,2} \frac{f_1^{(1)}}{f_1} - f_{1,2}^{(1)} - f_{1,2}^{(2)}f_1 \frac{\partial \tau_2}{\partial u_1}\biggr).
\]
This gives us,
\[
\int_0^1 \int_0^1\left|\frac{\partial^2\tau_2}{\partial u_1 \partial u_2}\right|\rd u_1\rd u_2 = \int_0^1 \int_0^1 \frac{1}{f_{1,2}^2} \left| f_{1,2} \frac{f_1^{(1)}}{f_1} - f_{1,2}^{(1)} - f_{1,2}^{(2)}f_1 \frac{\partial \tau_2}{\partial u_1}\right| \rd u_1 \rd u_2. 
\]
Again by a change of variables via $x = F_1^{-1}(u_1)$ and $y = F_{2\mid1}^{-1}(u_2 \mid F_1^{-1}(u_1))$ we have,
\begin{align*}
& \int_0^1 \int_0^1 \left|\frac{\partial^2\tau_2}{\partial u_1 \partial u_2}\right|\rd u_1 \rd u_2 \\
&=  \int_a^b \int_{a(x)}^{b(x)}\left| \frac{f_1^{(1)}(x)}{f_1(x)} - \frac{1}{f_{1,2}(x,y)}\left(f_{1,2}^{(1)}(x,y) + f_{1,2}^{(2)}(x,y) \frac{\rd y}{\rd x}\right)\right| \rd y \rd x. 
\end{align*}
Let us write $\Delta$ for  the total derivative operator:
\begin{align*}
\Delta f &= \sum_{k=1}^d f^{(k)}(x_1,\ldots, x_d) \rd x_k,\quad\text{and}\\
\frac{\Delta f}{\rd x_j} &=  \sum_{k=1}^d f^{(k)}(x_1,\ldots, x_d) \frac{\rd x_k}{\rd x_j}.
\end{align*}
This allows us to write the integral above as 
\begin{align*}
 \int_0^1 \int_0^1 \left|\frac{\partial^2\tau_2}{\partial u_1 \partial u_2}\right|\rd u_1\rd u_2 
&=  \int_a^b \int_{a(x)}^{b(x)} \left| \frac{\Delta \log (f_1)}{\rd x} - \frac{\Delta \log (f_{1,2})}{\rd x}\right|\rd y \rd x. 
\end{align*}

Combining these, we now have a sufficient condition for the Rosenblatt transformation in two dimensions to be in $\bvhk$ which we summarize in Lemma~\ref{d2}. 

\begin{lemma}\label{lem:rose}
\label{d2}
Let $X_1$ be supported on the finite interval $[a,b]$ and,
given $X_1=x_1$, let $X_2$ be supported on the finite
interval $[a(x_1),b(x_1)]$.
Let $f_1$ and $f_{1,2}$ be the densities of $X_1$ and $(X_1, X_2)$ respectively.
Then the Rosenblatt-Hlawka-M\"uck 
transformation \eqref{rosen} is of bounded variation in the sense of Hardy and Krause if for each $k = 1,2$,
\begin{align}\label{eq:rose1}
 \int_{a}^b\int_{a(x_1)}^{b(x_1)} \left| \frac{\Delta \log (f_{1,\cdots,k})}{\rd x_1}\right|\rd x_2 \rd x_1 & < \infty,
\quad{\text{and}} \\
\int_{a}^b\,\biggl|
{\frac{\partial F_{2|1}}{\partial x_1}}\Bigm/ {\frac{ \partial F_{2|1}}{\partial x_2}} 
 \biggr|_{(x_1, b(x_1)}\rd x_1 &<\infty \label{eq:rose2}
\end{align}
where ${\partial F_{2|1}}/{\partial x_1}$ and ${\partial F_{2|1}}/{\partial x_2}$ are given by \eqref{eq:partial1} and \eqref{eq:partial2} respectively.

\end{lemma}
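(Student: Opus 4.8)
The plan is to apply the Hardy-Krause bound~\eqref{eq:hkbound} separately to each coordinate $\tau_1$ and $\tau_2$, using the fact that a vector map is in $\bvhk$ exactly when each of its components is. On $[0,1]^2$ the sum in~\eqref{eq:hkbound} has only three terms, indexed by $u\in\{\{1\},\{2\},\{1,2\}\}$, so for each coordinate it suffices to verify that three explicit integrals are finite; these are precisely the integrals computed in the paragraphs preceding the lemma.

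First I would dispose of $\tau_1$. Since $\tau_1(u_1,u_2)=F_1^{-1}(u_1)$ depends on $u_1$ alone, its mixed partial $\partial^{\{1,2\}}\tau_1$ vanishes identically, so the Vitali term ($u=\{1,2\}$) and the anchored $u=\{2\}$ term are both zero. Only the $u=\{1\}$ term survives, and $\int_0^1|\partial\tau_1/\partial u_1|\rd u_1=\int_0^1 1/f_1(F_1^{-1}(u_1))\rd u_1=b-a$, which is finite because $X_1$ is supported on the compact interval $[a,b]$. Hence $\tau_1\in\bvhk$ with no extra hypotheses beyond the finite support.

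The substance is $\tau_2$. For the anchored $u=\{2\}$ term, the computation before the lemma gives $\int_0^1|\partial\tau_2/\partial u_2|_{u_1=1}\rd u_2=b_2(b)-a_2(b)$, finite because the conditional support $[a(b),b(b)]$ is bounded. For the anchored $u=\{1\}$ term, the implicit-function-theorem formula for $\partial\tau_2/\partial u_1$ together with the change of variables $x_1=F_1^{-1}(u_1)$ bounds $\int_0^1|\partial\tau_2/\partial u_1|_{u_2=1}\rd u_1$ by exactly the left side of~\eqref{eq:rose2}, assumed finite. For the Vitali term I would invoke the already-derived identity that, after the two-dimensional change of variables $x=F_1^{-1}(u_1)$ and $y=F_{2\mid1}^{-1}(u_2\mid F_1^{-1}(u_1))$, rewrites $\int_0^1\int_0^1|\partial^2\tau_2/\partial u_1\partial u_2|\rd u_1\rd u_2$ as $\int_a^b\int_{a(x)}^{b(x)}|\Delta\log(f_1)/\rd x-\Delta\log(f_{1,2})/\rd x|\rd y\rd x$; a single application of the triangle inequality splits this into the $k=1$ and $k=2$ instances of~\eqref{eq:rose1}, both assumed finite. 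Summing the three finite terms in~\eqref{eq:hkbound} yields $\vhk(\tau_2)<\infty$.

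The main obstacle is not any individual estimate but the legitimacy of the manipulations that produce the clean $\Delta\log$ form. Differentiating the conditional inverse through the implicit function theorem requires $\partial F_{2\mid1}/\partial x_2=f_{2\mid1}>0$, and the change of variables requires strict positivity of $f_{1,2}$ and a nonvanishing Jacobian, which is exactly why that hypothesis is imposed; I would also make sure the boundary contributions arising when differentiating $F_{2\mid1}$ in~\eqref{eq:partial1} are controlled by the finiteness of $a(x_1)$ and $b(x_1)$. With positivity and compact supports in force, the three computed integrals assemble through~\eqref{eq:hkbound} to give $\tau\in\bvhk$.
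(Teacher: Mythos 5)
Your proposal is correct and follows essentially the same route as the paper: the paper's own proof is exactly the sequence of computations preceding the lemma (the anchored integrals $b_2(b)-a_2(b)$ and the bound by~\eqref{eq:rose2}, plus the change of variables giving the $\Delta\log$ form of the Vitali integral), assembled through the Hardy--Krause bound~\eqref{eq:hkbound} with the triangle inequality splitting the Vitali term into the $k=1$ and $k=2$ instances of~\eqref{eq:rose1}. Your handling of $\tau_1$ and your remark on the strict positivity of $f_{1,2}$ needed for the implicit function theorem and the change of variables also match the paper's assumptions.
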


\medskip
From condition~\eqref{eq:rose1} we see
that the densities $f_1$ and $f_{1,2}$ can be problematic
as they approach zero on $\cx$, for then $\Delta \log f$ may become
large. Thus we anticipate better results when these densities are uniformly
bounded away from zero on $\cx$.
Condition~\eqref{eq:rose2} involves an integral over the upper
boundary of $\cx$.  If that upper boundary is flat, that is $b(x_1)$ is
constant on $a\le x_1\le b$, then the partial derivative in the numerator
there vanishes.
It is possible to generalize Lemma~\ref{lem:rose} to $d>2$ but
the resulting quantities become very difficult to interpret.

\subsection{Importance sampling QMC for the simplex}\label{sec:issimplex}

We map $\bsu=(u_1,u_2,\dots,u_d)\in[0,1]^d$ to 
$\bsx=\tau(\bsu)$ in the simplex 
$$A_d=\{(x_1,\dots,x_d)\in[0,1]^d\mid x_1 \le x_2\le\cdots\le x_d\}.$$
The mapping is given by 
$$
x_j = \tau_j(\bsu) = \prod_{k\ge j}u_k^{a_k}
$$
for constants $a_{k}>0$. 
The uniformity preserving mapping from \cite{fang1993number}  has $a_{k} = 1/k$. 

The Jacobian matrix for this transformation is upper triangular and hence 
the Jacobian determinant is 
$$
J(\bsu) = 
\prod_{j=1}^d \frac{\partial x_j}{\partial u_j}
= \prod_{j=1}^d a_{j}u_j^{a_j-1} \prod_{k>j}u_k^{a_k}
= A \prod_{j=1}^d u_j^{ja_j-1}
$$
where $A=\prod_ja_{j}$. 
The average of $J(\bsu)$ is $1/\vol(A_d) = 1/d!$ and $0\le J(\bsu)\le A$.  
The choice of \cite{fang1993number}
gives $J=1/d!$ for all $\bsu$. 
It is desirable to have $J$ be nearly constant.  
If $A\gg 1/d!$ then $J(\bsu)$ is a very `spiky' function and 
that will tend to defeat the purpose of low discrepancy sampling. 

The RQMC estimate of 
\begin{align*}
\mu &= d!\int_{A_d}f(\bsx)\rd\bsx = d!\int_{[0,1]^d}f(\tau(\bsu)) J(\bsu) \rd\bsu\quad\text{is}\\
\hat\mu &= \frac{d!}n\sum_{i=1}^n f(\tau(\bsu_i))J(\bsu_i). 
\end{align*}

Suppose that $f\in C^d$. 
Ignoring the $d!$ factor, 
the integrand on $[0,1]^d$ is now $\wt f(\bsu)=f(\tau(\bsu))J(\bsu)$, and 
$\partial^v \wt f 
= \sum_{w\subseteq v} \partial^w (f\circ\tau)\times \partial^{v-w}J. 
$
The definition of $\tau_j$ in this case makes it convenient to work with a simple function class 
consisting of integrands of the form $\prod_{j=1}^dx_j^{q_j}$ for real values $q_j\ge0$. 

\begin{theorem}\label{thm:rateforsimplex}
For $\bsx\in A_d$ let $f(\bsx) = \prod_{j=1}^d x_j^{q_j}$ 
for $q_j\ge0$. 
For $j=1,\dots,d$ and $\bsu\in[0,1]^d$, define $x_j=\tau_j(\bsu)=\prod_{k\ge j}u_k^{a_k}$
and the Jacobian $J(\bsu) = \prod_{j=1}^d a_ju_j^{ja_j-1}$ where $a_j>0$. 
Then $\partial^v f(\bsx(\bsu))J(\bsu)\in L^2[0,1]^d$ for all $v\subseteq1{:}d$
and all $q_j$ if and only if $a_j>3/(2j)$ holds for $j=1,\dots,d$. 
\end{theorem}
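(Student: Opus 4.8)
The plan is to exploit the purely multiplicative structure of both $f\circ\tau$ and $J$ so that the whole question collapses to a one–dimensional power–integrability check in each coordinate separately.

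First I would write the integrand $\wt f(\bsu)=f(\tau(\bsu))J(\bsu)$ in closed form. Substituting $x_j=\prod_{k\ge j}u_k^{a_k}$ into $f=\prod_{j}x_j^{q_j}$ and collecting the exponent of each $u_k$ gives
\begin{align*}
f(\tau(\bsu))=\prod_{k=1}^d u_k^{a_kQ_k},\qquad Q_k:=\sum_{j=1}^k q_j,
\end{align*}
and multiplying by $J(\bsu)=A\prod_{k}u_k^{ka_k-1}$, with $A=\prod_j a_j$, yields
\begin{align*}
\wt f(\bsu)=A\prod_{k=1}^d u_k^{b_k},\qquad b_k:=a_k(Q_k+k)-1.
\end{align*}
The key observation is that $\wt f$ is completely separable, a product of univariate monomials. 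Hence for any $v\subseteq1{:}d$ the mixed partial $\partial^v\wt f$ is again a product over $k$ of one–variable factors, equal to $b_ku_k^{b_k-1}$ when $k\in v$ and to $u_k^{b_k}$ when $k\notin v$, and therefore $\Vert\partial^v\wt f\Vert_2^2$ factorizes into a product of one–dimensional integrals of the form $\int_0^1 u^{2c}\rd u$.

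Next I would read off the integrability requirement. Since $\int_0^1 u^{2c}\rd u<\infty$ precisely when $c>-1/2$, a coordinate $k\notin v$ contributes a finite factor iff $b_k>-1/2$, while a coordinate $k\in v$ requires $b_k-1>-1/2$, i.e. $b_k>1/2$; the latter is the binding constraint. Letting $v$ range over all subsets — in particular letting $v$ contain each index $k$ in turn — the statement ``$\partial^v\wt f\in L^2$ for every $v$'' is equivalent to $b_k>1/2$ for every $k$. For \emph{sufficiency}, $a_k>3/(2k)$ combined with $Q_k\ge0$ gives $b_k\ge a_kk-1>\tfrac32-1=\tfrac12$ for all $k$, so every factor is integrable and $\partial^v\wt f\in L^2$ for all $v$ and all admissible $q_j\ge0$ at once.

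For \emph{necessity} I would argue by contraposition: assuming $a_k\le 3/(2k)$ for some fixed $k$, I would exhibit one bad pair $(q,v)$. Taking all $q_j=0$ makes $b_k=a_kk-1\le\tfrac12$, and choosing $v=\{k\}$ makes the $k$–th factor of $\partial^v\wt f$ a power $u_k^{b_k-1}$ with exponent $b_k-1\le-\tfrac12$, so that $\int_0^1 u_k^{2(b_k-1)}\rd u_k=\infty$. The one point that genuinely needs care — and the main obstacle in the argument — is the degenerate coincidence $b_k=0$, which happens exactly when $a_k=1/k$ and $q=0$: there the derivative factor $b_ku_k^{b_k-1}$ vanishes identically and the candidate counterexample degenerates to the zero function. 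I would patch this by a small perturbation, e.g. setting $q_1=\varepsilon$ for some small $\varepsilon>0$, which leaves $b_k=a_k\varepsilon\in(0,\tfrac12)$ strictly positive while keeping $b_k-1<-\tfrac12$, so the $k$–th factor is a genuine nonintegrable power. Hence failure of $a_k>3/(2k)$ at any single $k$ already forces $\partial^{\{k\}}\wt f\notin L^2$ for a suitable $q$, and the equivalence follows.
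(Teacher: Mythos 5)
Your proof is correct and takes essentially the same route as the paper's: both reduce $\widetilde f(\bsu)=f(\tau(\bsu))J(\bsu)$ to the separable monomial $A\prod_{k}u_k^{b_k}$ with $b_k=a_k(Q_k+k)-1$, check $L^2$-integrability coordinate by coordinate, and prove necessity via $Q_j=0$ with $v=\{j\}$. In fact your explicit $\varepsilon$-perturbation for the degenerate case $a_j=1/j$ (where $Q_j=0$ makes the coefficient $ja_j-1+a_jQ_j$ vanish, so the derivative is identically zero and hence trivially in $L^2$) carefully handles a case that the paper's converse, as written, silently skips.
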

\begin{proof}
Let  $Q_k=\sum_{j\le k}q_k$ and  $A=\prod_ja_j$. Then 
$$\wt f(\bsu)=f(\boldsymbol{\tau}(\bsu))J(\bsu) 
=A\prod_{k=1}^d u_k^{ka_k-1+a_kQ_k}
$$
For $v\subseteq1{:}d$ we find that $(\partial^v\wt f(\bsu))^2$ equals 
\begin{align}\label{eq:l2simpintegrand}
A^2\prod_{k\in v}
({ka_k-1+a_kQ_k} )^2 u_k^{2(ka_k-2+a_kQ_k)}
\prod_{k\in -v}u_k^{2(ka_k-1+a_kQ_k)}. 
\end{align}
The coefficient ${ka_k-1+a_kQ_k}$ cannot vanish for all $q$. Therefore~\eqref{eq:l2simpintegrand}
has a finite integral for all $q_j$ if and only if 
$2(ja_j-2+a_jQ_j)>-1$ for all $j$ and all $q_1,\dots,q_d$. 
This easily holds if all $a_j>3/(2j)$.  
Conversely, suppose that $a_j\le 3/(2j)$
for some $j$. We may choose $Q_j=0$ and $v=\{j\}$ and see that~\eqref{eq:l2simpintegrand}
is not integrable. 
\end{proof}

From Theorem~\ref{thm:rateforsimplex} we see that RQMC can attain the 
$O(n^{-3/2+\epsilon})$ rate for 
functions of the form $\prod_{j=1}^d x_j^{q_j}$ on the simplex $A_d$. That 
rate extends to linear combinations of finitely many such functions, including 
polynomials and more. 
If we choose $a_j = 3/(2j)+\eta$ for some small $\eta>0$ then 
for fixed $d$ we have $d!J(\bsu) = (3/2)^d +O(\eta)$. 
There is thus a dimension effect. The integrand becomes more spiky 
as $d$ increases.  We can expect that the lead constant in the error 
bound will grow exponentially with $d$. 

For $d=1$, Theorem~\ref{thm:rateforsimplex}  requires $a_1>3/2$
whereas ordinary RQMC attains the $O(n^{-3/2+\epsilon})$
RMSE  with $a_1=1$ in that case. 
The reason for the difference is that the theorem covers more challenging 
integrands like $x_1^{1/2}$ whose derivative is not in $L^2$. 
If we work only with polynomials taking only $q_j\in\natu$, 
then the choice $a_k=1/k$ zeros out~\eqref{eq:l2simpintegrand}
when $Q_k=0$. The smallest nonzero $Q_k$ is then $1$ and we 
would need to impose $2(ka_k-2+a_kQ_k)>-1$. 
That simplifies to $Q_k>k/2$ which can only be ensured for $k=1$
and hence the Fang and Wang choice $a_k=1/k$ will not attain 
the RQMC rate for polynomials when $d\ge2$. 

We can extend  Theorem~\ref{thm:rateforsimplex} to all 
$f\in C^d$ via Theorem~\ref{thm:rqmcsmooth}, but only
for relatively large $a_j$. We require such large $a_j$
because the generalized H\"older inequality is conservative in 
this setting.

\begin{theorem}\label{thm:rateforsimplex2}
Let $f\in C^d(A_d)$, and define $x_j=\tau_j(\bsu) = \prod_{k=j}^du_j^{a_j}$ for 
$a_j>0$. Let $\tilde f(\bsu) = f(\tau(\bsu))J(\bsu)$ for the Jacobian 
$\prod_{j=1}^da_ju_j^{ja_j-1}$. If $a_1>3/2$ and  $a_j\ge 1$ for $2\le j\le d$, 
then $\partial^v\tilde f\in L^2[0,1]^d$. 
\end{theorem}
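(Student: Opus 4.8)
The plan is to reduce the whole statement to the $L^2$-integrability of a single explicit monomial, namely a mixed partial derivative of the Jacobian $J$, after checking that every other factor that arises is bounded. The starting point is the Leibniz expansion recorded just before Theorem~\ref{thm:rateforsimplex}, $\partial^v\tilde f=\sum_{w\subseteq v}\partial^w(f\circ\tau)\,\partial^{v-w}J$, so it suffices to show that each summand $\partial^w(f\circ\tau)\cdot\partial^{v-w}J$ lies in $L^2[0,1]^d$.

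First I would handle the composition factor $\partial^w(f\circ\tau)$ and argue it is bounded. Expanding it with the binary Faa di Bruno formula~\eqref{eq:consavfdbbinaryreduced} writes it as a finite sum of terms $f_\bslam(\tau(\bsu))\prod_{r=1}^s\partial^{\ell_r}\tau_{k_r}$ with $|\bslam|\le|w|\le d$. Since $\tau_{k_r}(\bsu)=\prod_{i\ge k_r}u_i^{a_i}$, a direct differentiation gives $\partial^{\ell_r}\tau_{k_r}=\bigl(\prod_{i\in\ell_r}a_i\bigr)\prod_{i\in\ell_r}u_i^{a_i-1}\prod_{i\ge k_r,\,i\notin\ell_r}u_i^{a_i}$ whenever $\ell_r\subseteq\{k_r,\dots,d\}$ (and it vanishes otherwise). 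Under the hypotheses $a_1>3/2$ and $a_j\ge1$ for $j\ge2$, every exponent appearing here is nonnegative ($a_1-1>1/2$ and $a_i-1\ge0$ for $i\ge2$), so each such factor is bounded on $[0,1]^d$. Because $f\in C^d(A_d)$ with $A_d$ compact, every $f_\bslam$ with $|\bslam|\le d$ is bounded on $A_d$, and $\tau(\bsu)\in A_d$, so $f_\bslam(\tau(\bsu))$ is bounded too. Hence $\partial^w(f\circ\tau)$ is a bounded function.

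Next I would show $\partial^{v-w}J\in L^2$. Differentiating $J(\bsu)=A\prod_j u_j^{ja_j-1}$ once in each $u_j$ with $j\in v-w$ yields a constant multiple of $\prod_{j\in v-w}u_j^{ja_j-2}\prod_{j\notin v-w}u_j^{ja_j-1}$, so the exponent of each $u_j$ is at least $ja_j-2$. For $j=1$ this equals $a_1-2$, which exceeds $-1/2$ precisely because $a_1>3/2$; for $j\ge2$ it is at least $2a_j-2\ge0$. Squaring keeps every exponent above $-1$, so $\int_{[0,1]^d}(\partial^{v-w}J)^2<\infty$. Combining the two steps, each summand is a bounded function times an $L^2$ function, hence is in $L^2$, and therefore $\partial^v\tilde f\in L^2[0,1]^d$ for every $v\subseteq1{:}d$.

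The main point to get right is the exponent bookkeeping that isolates where the hypotheses are needed: the only factor that can fail to be bounded is $u_1^{a_1-2}$, produced when the $u_1$-derivative falls on $J$ (that is, when $1\in v-w$) rather than on $f\circ\tau$, and this is exactly what forces $a_1>3/2$; the role of $a_j\ge1$ for $j\ge2$ is merely to make every derivative of $\tau$ bounded. This is the conservative route flagged in the text: instead of distributing the differentiations across the factors and balancing mild singularities through the generalized H\"older inequality, I treat every $\tau$-derivative as an $L^\infty$ factor (contributing nothing to the H\"older budget) and spend the entire budget of $1/2$ on the Jacobian term alone. That is why the resulting sufficient condition $a_j\ge1$ is stronger than the sharp $a_j>3/(2j)$ obtained for monomial integrands in Theorem~\ref{thm:rateforsimplex}.
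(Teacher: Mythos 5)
Your proof is correct, and the exponent bookkeeping is exactly where the hypotheses enter. The route differs from the paper's mainly in packaging: the paper augments the transformation with a $(d+1)$-st coordinate $\tau_{d+1}(\bsu)=J(\bsu)$, regards $\tilde f$ as $F\circ\tau$ where $F(x_1,\dots,x_{d+1})=f(x_1,\dots,x_d)\,x_{d+1}$ on $\cx=A_d\times[0,A]$, and then invokes Theorem~\ref{thm:rqmcsmooth} with H\"older exponents $p_1=\cdots=p_d=\infty$ and $p_{d+1}=2$ --- precisely the allocation you describe as spending the entire budget of $1/2$ on the Jacobian. What you do instead is unroll that machinery by hand: your Leibniz split $\partial^v\tilde f=\sum_{w\subseteq v}\partial^w(f\circ\tau)\,\partial^{v-w}J$ combined with the Faa di Bruno expansion~\eqref{eq:consavfdbbinaryreduced} of the first factor is what the Faa di Bruno expansion of $F\circ\tau$ produces when $F$ is linear in its last argument, and your ``bounded times $L^2$'' step is the degenerate case of the generalized H\"older inequality. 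Your version buys self-containedness and transparency: it isolates the single dangerous factor $u_1^{a_1-2}$ arising when the $u_1$-derivative falls on $J$, and it avoids a small blemish in the paper's argument, namely the assertion that $F\in C^{d+1}(\cx)$ --- since $f$ is only assumed $C^d$, $F$ is in general only $C^d$ (a pure $x_1$-derivative of order $d+1$ need not exist), though this is harmless there because the relevant Faa di Bruno sum over $[0,1]^d$ only uses $f_\bslam$ with $|\bslam|\le d$. The paper's version buys brevity and exhibits the theorem as a direct instance of its general framework rather than a separate computation.
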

\begin{proof}
Define $\cx = A_d\times[0,A]\subset\real^{d+1}$ where $A=\prod_{j=1}^da_j$
and $\tau_{d+1}(\bsu) = J(\bsu)$. 
Then $\tilde f(\tau_1(\bsu),\dots,\tau_{d+1}(\bsu)) = 
f(\tau_1(\bsu),\dots,\tau_d(\bsu)) \tau_{d+1}(\bsu)\in C^{d+1}(\cx)$. 

For $j\le d$, and $v\subseteq1{:}d$ we have 
$\partial^v\tau_j =0$ unless $v\subseteq j{:}d$
and if  $v\subseteq j{:}d$, then 
$\partial^v\tau_j = \prod_{\ell\in v}a_\ell u_\ell^{a_\ell-1}\times\prod_{\ell\in j{:}d-v}u_\ell^{a_\ell}$.
Under the conditions of this theorem every
$\tau_j\in L^\infty[0,1]^d$. Next we can directly find that under the given conditions 
$\tau_{d+1} = J\in L^2[0,1]^d$. 
Then we have $\partial^v \tilde f\in L^2$
by Theorem~\ref{thm:rqmcsmooth}. 
\end{proof}

In Monte Carlo sampling, the effect of nonuniform importance 
sampling is sometimes measured via an effective sample 
size. See \cite{kong1994sequential}.
For the Jacobian above the effective sample 
size is the nominal one multiplied by 
$(\int J(\bsu)\rd\bsu)^2/\int J(\bsu)^2\rd\bsu$. If we take $a_j=3/(2j)$
this factor becomes $(8/9)^d$ which corresponds to a mild exponential 
decay in effectiveness for Monte Carlo sampling.  There seems to be
as yet no good measure of effective sample size for randomized QMC.



\section*{Acknowledgments}

This work was supported by the NSF under grant DMS-1407397.

\bibliographystyle{apalike}
\bibliography{qmc}
\end{document}